\renewcommand{\S}{{\mathbb S}}
\newcommand{\R}{{\mathbb R}}
\newcommand{\D}{{\mathcal D}}
\newcommand{\C}{{\mathbb C}}
\def\tilde{\widetilde}
\def \bfo {\begin {eqnarray*} }
\def \efo {\end {eqnarray*} }
\def \ba {\begin {eqnarray*} }
\def \ea {\end {eqnarray*} }
\def \beq {\begin {eqnarray}}
\def \eeq {\end {eqnarray}}
\def \supp {\hbox{supp}\,}
\def\diag{\hbox{diag }}
\def \p {\partial}
\def\F{{\mathcal F}}
\newtheorem{definition}{Definition}[section]
\newtheorem{theorem}[definition]{Theorem}
\newtheorem{lemma}[definition]{Lemma}
\newtheorem{problem}[definition]{Problem}
\newtheorem{proposition}[definition]{Proposition}
\newtheorem{corollary}[definition]{Corollary}
\DeclareMathOperator{\WF}{WF}
\title[Inverse  problem of finding cosmic strings]
{On the inverse  problem of finding cosmic strings and other topological defects}
\author[Lassas]{Matti Lassas}
\address{Department of Mathematics and Statistics, University of Helsinki, Box 68, Helsinki, 00014, Finland}
\email{Matti.Lassas@helsinki.fi}
\author[Oksanen]{Lauri Oksanen}
\address{Department of Mathematics, University College London, Gower Street, London, WC1E 6BT, UK}
\email{l.oksanen@ucl.ac.uk}
\author[Stefanov]{Plamen Stefanov}
\address{Department of Mathematics, Purdue University, West Lafayette, IN 47907, USA}
\email{stefanop@purdue.edu}
\author[Uhlmann]{Gunther Uhlmann}
\address{Department of Mathematics, University of Washington, 
Box 354350
Seattle, Washington 98195,
USA\footnote{Also affiliated to Department of Mathematics and Statistics, University of Helsinki, Finland,
and 
HKUST Jockey Club Institute for Advanced Study, HKUST, Clear Water Bay, Kowloon, Hong Kong, China.
}}
\email{gunther@math.washington.edu}
\date{May 11, 2015}
\keywords{Inverse problems, tomography, Cosmic Microwave Background, 
cosmic strings, topological defects}
\begin{document}

\begin{abstract}
We consider how  microlocal methods developed for tomographic problems
can be used to detect singularities of the Lorentzian metric of the Universe
using measurements of the Cosmic Microwave Background radiation.
The physical model we study is mathematically rigorous but highly idealized.
\end{abstract}

\maketitle

\section{Introduction}\label{introduction 1}

We study the dectection of singularities 
of the Lorenzian metric of the Universe
from Cosmic Microwave Background (CMB) radiation measurements.
The singularities are considered in the sense of the wave front set that describes where the metric is non-smooth in the spacetime and
also in which direction the singularity occurs. The direction of the singularity is characterized by using the Fourier transform of the metric, see Definition \ref{def_WF} below.


A singularity in the metric could be caused for example by a cosmic string \cite{Anderson1,Vil1}.
A cosmic string is a singularity in the stress energy tensor that is supported on a two-dimensional timelike surface in the 
spacetime. 
The existence of cosmic strings finds support in super-string theories 
\cite{Sar},
however, 
there is no direct connection between string theory and the theory of cosmic strings.
We refer to \cite{PlanckXIX,PlanckXXV,Sazhina} regarding the existence (or inexistence) of cosmic strings in view of CMB measurements collected by the Planck Surveyor mission in 2013. 

The singularities of which potential detectability is interesting to study
include cosmic stings, monopoles, cosmic walls and black holes.
There is a vast physical literature concerning the 
effects of particular types of singularities or topological defects on the CMB measurements, see e.g. \cite{Coulson,Cruz,Fraisse} and references therein.
The contribution of the present paper is to adapt techniques from the mathematical study of inverse problems to CMB measurements. These techniques allow us to detect singularities without apriori knowledge of their geometry.
Hence it might be also possible to detect singularities that are not predicted by the current physical knowledge. 
Furthermore, the techniques allow us to study the opposite question, that is, what type of singularities are invisible to our measurements and therefore can not be detected \cite{GKLUbul, Greenleaf2003a}.







Several types of measurements have been proposed in astrophysical literature for 
detection of topological defects of Universe.
These
include optical measurements, such as gravitational lensing effects caused by cosmic stings
\cite{Agol,Sazhin,Schild},
observations of density of mass in Universe, measurements of gravitational waves
\cite{Aasi}, and
the temperature changes in the Cosmic Microwave Background \cite{PlanckXIX,PlanckXXV,Fraisse, Sazhina}.
We show that the CMB measurements have a tomographic nature, 
and concentrate
on them in this paper.

The detection of singularities has been
extensively studied in microlocal theory of tomography, 
see e.g. \cite{Greenleaf1989, Greenleaf1990, Quinto1993} and the review \cite{FLU}.
In many problems related to tomography, for instance in medical imaging,
it has been shown that measurements can be used to
detect singularities. Moreover, the visible singularities have been characterised
in many cases. 
In this microlocal context, a singularity is considered to be invisible if it causes only a smooth perturbation in the measured data. 

In the present paper we characterize the singularities of the Lorenzian metric that are visible from a linearization of the CMB measurements, and that move slower than the speed of light.
We show also that all singularities moving faster than the speed of light are invisible.
We do not analyze recovery of singularities moving at the speed of light.
Our approach is based on a highly idealized deterministic model of CMB measurements,
but such a model can be viewed as a first step in developing tomographic methods for
more realistic, possible stochastic, models of
the CMB measurements, see e.g. \cite{Ha}. 

We obtain the characterization of visible singularities via microlocal analysis of the geodesic ray transform on a
Friedmann-Lema\^{i}tre-Robertson-Walker type spacetime. The transform is restricted on light rays and we call it the light ray transform. 
Such a light ray transform has been studied in $1+2$ dimensions in \cite{Guillemin1989}, and the $1+3$ dimensional light ray transform, as considered in the present paper, belongs to the class of Fourier integral operators studied in \cite{Greenleaf1991}.

Contrary to \cite{Greenleaf1991}, we avoid using the $I^{p,l}$ calculus
by first microlocalizing on spacelike covectors. Then
we invert microlocally the light ray transform 
up to potential fields, see (\ref{def_potential_field}) below for the definition, and conformal multiples of the metric of the spacetime. 
This is sharp since those two subspaces belong to the kernel of the linearization and they correspond to the gauge invariance of the non-linear problem under diffemorphisms and conformal changes. 
The spacelike covectors correspond physically to singularities moving slower than the speed of light.

This paper is organized as follows. Section 1 is the introduction and Section 2 introduces some notations. In Section 3 we formulate the inverse problem for the CMB measurements and its linearization, and in Section 4 we state the main results. Section 5 deals with the parametrization of the CMB measurements. In Section 6 we describe the conformal invariance inherent in the problem. 
Section 7 contains the reduction of the linearized problem to inversion of the light ray transform. 
In Section 8 we study the light ray trasform in a translation invariant case and express its normal operator as a Fourier multiplier. 
This motivates our subsequent study in the general case which is not translation invariant due to the fact that measurements are available only in a small set in the spacetime. 
In Section 9 we study the null space of the light ray transform that corresponds to the gauge invariance of the problem.  
In Sections 10 and 11 we characterize the visible spacelike singularities. Moreover, we compute the symbol of the normal operator of the light ray transform on the cone of spacelike covectors, on which it is a pseudodifferential operator. 

\section{Notations}
\def\Sym{\mathop{Sym^2}}
\def\L1{\Lambda^1}
\def\E{\mathcal E}

Let $M \subset \R^n$ be open.
We denote by $\D'(M)$ the distributions on $M$
and by $\E'(M)$ the distributions with compact support.
Moreover, we denote by $C^m(M)$, $m = 0,1,2,\dots, \infty$, the space of $m$-times continuously differentiable functions, and let $C_0^m(M) \subset C^m(M)$ be the subspace of functions with compact support. 
For $m < \infty$, we define also the Banach space
$$
C_b^m(M) = \{u \in C^m(M);\ \sum_{|\alpha|\le m} \sup_{x \in M} |\p_x^\alpha u(x)| < \infty \}.
$$
We denote by $\Sym$ the symmetric 2-tensors on $M$,
and by $\L1$ the 1-forms on $M$.
We use the notations $\mathcal F(M; E)$,
$\mathcal F = \D', \E', C^m, C_b^m$; $E = \Sym, \L1, \C$,
for the subspaces $\F$ of distributions taking values on
the vector bundle $E$. 

We denote by $S^m(T^*M) = S_{1,0}^m(T^*M)$, $m \in \R$, the symbols as defined in \cite{FIO1}.
If $\chi \in S^m(T^*M)$ then we denote the corresponding pseudodifferential operator also by $\chi$.
We denote the wave front set of a distribution $u \in \D'(M;E)$
by $\WF(u)$. For reader's convenience we recall the definition here. 

\begin{definition}
\label{def_WF}
Let $u \in \D'(M;E)$, $E = \Sym, \L1, \C$.
A point $(x, \xi) \in T^*M \setminus 0$ is in the wave front set of $u$ 
if there does not exists an open cone $\Gamma \subset T_x^* M$ containing $\xi$ and 
a function $\phi \in C_0^\infty(M)$ satisfying $\phi(x) \ne 0$
such that for all $N = 1,2,\dots$ there is $C > 0$ such that the Fourier transform of $\phi u$ decays as follows:
$$
|\widehat{\phi u}(\xi)| \le C (1 + |\xi|)^{-N}, \quad \xi \in \Gamma.
$$
\end{definition}

Let us recall a typical example, see e.g. \cite[Th. 8.1.5]{HormanderVol1}. 
Let $u \in C_0^\infty(M)$, $k \le n$, and let us consider coordinates
$(x', x'') \in \R^{k \times (n-k)}$ on $M$. 
Then the Euclidean surface measure $\delta(x'')$ of $\{(x', 0); x' \in \R^k \}$ satisfies
$$
\WF(u \delta(x'')) = \{(x', 0; 0, \xi'') \in T^*M \setminus 0;\ (x', 0) \in \supp(u) \},
$$
where $\xi''$ is the covector corresponding to $x''$.
As all smooth surfaces in $M$ are locally of this form, it can be seen that the wave front set
of a surface measure
is the conormal bundle of the corresponding surface. 
This applies in particular to cosmic strings that are singularities supported on two dimensional surfaces, see e.g. \cite[Chapter 6]{Anderson1}. 


\section{Mathematical formulation of the CMB measurements}

\subsection{The data and the inverse problem}

Let us begin by formulating a mathematical problem on a time-oriented Lorentzian manifold $(M,g)$ of dimension $1+n$. 
Below, we will consider a linearized version of this problem, and solve the linearized problem only in a microlocal sense.  
We assume that $M$ is a smooth manifold
and that $g$ is a $C^2$-smooth metric tensor with signature $(-, +, \dots, +)$.

We recall that a point $(x,V)$ on the tangent bundle $T M$
is called spacelike if the inner product $(V,V)_g$
is strictly positive, timelike if $(V,V)_g < 0$, and lightlike if $(V,V)_g = 0$. A submanifold $\Sigma \subset M$ is called spacelike if
all the tangent vectors in $T\Sigma$ are spacelike,
and a geodesic $\gamma : [0, \ell] \to M$ is called a null geodesic if its 
tangent vector $\dot \gamma(\tau)$ is lightlike for one, and hence for all $\tau \in [0,\ell]$. 
Let us also recall that $(x,Z) \in TM$ is called an observer 
if $Z$ is future pointing and $(Z,Z)_{g} = -1$.

\def\UU{\mathcal U}
Let $\Sigma \subset M$ be a smooth spacelike submanifold of codimension one,
and let $\nu$ be the future pointing normal vector field on $\Sigma$
that is of unit length in the sense that $(\nu, \nu)_g = -1$.
Let $\UU$ be another smooth submanifold of $M$
and let $Z$ be a smooth section of $TM$ defined on $\UU$,
and suppose that $(x,Z)$ is an observer for all $x \in \UU$.

Let $E_0 > 0$ and let us consider a null geodesic $\beta$ satisfying
\begin{align}
\label{def_beta}
\beta(0) \in \Sigma, \quad (\dot \beta(0), \nu)_g = -E_0,
\end{align}
and suppose that $\beta(\tau) \in \UU$ for some $\tau \in \R$.
We define
\begin{align}
\label{def_Newtonian_velocity}
x = \beta(\tau), \quad
V = E^{-1}\dot \beta(\tau) -  Z,
\end{align}
where $E = -(\dot \beta(\tau), Z)_g$.
Note that $V$ satisfies
\begin{align}
\label{def_cel_spere}
(V, Z)_{g} = 0,\quad (V, V)_{g} = 1,
\end{align}
and that $(x,V)$
determines uniquely a null geodesic $\beta$
satisfying (\ref{def_beta}).
Indeed, $\beta(\tau) = x$ and 
$\dot \beta(\tau) = E(V+Z)$
determine a null geodesic up to an affine reparametrization,
that is, up to a choice of $\tau$ and $E$,
and (\ref{def_beta}) fixes the parametrization.

We denote by $\mathbb S_{g,x,Z}$
the set of vectors $V \in T_x M$
satisfying (\ref{def_cel_spere}).
The set $\mathbb S_{g,x,Z}$ is called the celestial sphere of the observer $(x,Z)$, see e.g. \cite{SW}. 
Furthermore, we denote by $\mathcal S_{g,Z} \UU$
the set of points $(x, V) \in TM$
such that $x \in \UU$, $V \in \mathbb S_{g,x,Z}$,
and that there is a null geodesic $\beta$ 
satisfying (\ref{def_beta})
and (\ref{def_Newtonian_velocity}) for some $\tau \in \R$.

Physically, the null geodesics $\beta$ satisfying 
(\ref{def_beta}) 
correspond to photons that 
are emitted with a fixed energy $E_0$ uniformly in all future pointing lightlike directions
on $\Sigma$.
Moreover, $E$ and $V$ in (\ref{def_Newtonian_velocity}) are the energy and the Newtonian velocity of $\beta$ 
as measured by the observer $(x, Z)$.
The proportional difference between the emitted and observed energies,
\def\dom{\text{Dom}}
\begin{align}
\label{def_redshift}
\mathcal R_{g,Z}(x, V) = \frac{E_0 - E}{E} = 
\frac{(\dot \beta(0), \nu)_{g}}{(\dot \beta(\tau), Z)_{g}}
- 1, \quad (x,V) \in \mathcal S_{g,Z} \UU,
\end{align}
is called the redshift of $\beta$ as measured by $(x, Z)$.
Here $\beta$ is the null geodesic satisfying (\ref{def_beta})
and (\ref{def_Newtonian_velocity}) for some $\tau \in \R$.
A general formulation of the inverse problem that we consider is the following:
\begin{problem}[Inverse problem for redshift measurements]
\label{IP}
Given the function $\mathcal R_{g,Z} : \mathcal S_{g,Z} \UU \to \R$
determine $(W,g)$
where $W\subset M$ is the union of the null
geodesics connecting points of $\mathcal U$ to points of $\Sigma$.
\end{problem}

The red shift measurements (\ref{def_redshift}) do not change under the 
group of transformations generated by diffeomorphisms and conformal changes. 
Indeed, if $\psi$ is a diffeomorpism that is identity on $\Sigma$ and $\UU$,
then the redshift data corresponding to $g$ and its pullback $\psi^*g$ are the same.
Next, if $c$ is a smooth strictly positive function on $M$ 
satisfying $c=1$ on $\Sigma$ and $\UU$, then the redshift data corresponding to $cg$ and $g$ are the same.
This is based on the fact that the null geodesics of $c g$ are the same as those of $g$ as point sets but they are parameterized differently,
see Lemma \ref{lem_confinv_nullgeod} below. 
Hence the determination of $(M,g)$ in Problem \ref{IP} should be understood modulo the gauge invariance given by this group of transformations.

\subsection{The linearized CMB inverse problem}

Let us now turn to a linearization of Problem \ref{IP} at a 
Friedmann-Lema\^{i}tre-Robertson-Walker type model,
%
%
%
that is, at a Lorenzian manifold of the following warped product form
\begin{align}
\label{background_metric}
g(x) = -dt^2 + a^2(t) dy^2, \quad x =(t,y) \in M = (0, \infty) \times \R^3,
\end{align}
where $a > 0$ is smooth on $(0,\infty)$ and may have singularity as $t \to 0$.
A singularity at the boundary $t = 0$ 
corresponds physically to the Big Bang.
In particular, the choice $a(t) = t^{2/3}$ gives the Einstein-de Sitter cosmological model \cite[p. 31]{SW}. 
Occasionally we write $a(x) = a(t)$ for $x = (t,y)$.

Let $t_0 > 0$ be small and consider the surface
\begin{align}
\label{def_sigma}
\Sigma=\{ (t,y) \in M:\ t=t_0\}.
\end{align}
Then the photons satisfying (\ref{def_beta}) give a model for the CMB radiation. 
Physically, the time $t_0$ corresponds to the time when the Universe was cool enough
so that stable atoms could form. As these atoms could no longer absorb the thermal radiation, the photons that were around stayed forming the CMB radiation.

The model is highly idealized as the initial frequency spectrum of microwave 
background radiation would more likely follow e.g.\ Planck photon distribution
function, see e.g. \cite[Ex. 5.5.4]{SW}. 
A physically more realistic
model would allow the photons emitted at the surface $\Sigma$
to have an energy distribution with median $E_0$. 
This would lead to similar 
considerations as below, by assuming that we measure
the energy distribution of photons and find the median of this distribution.
However, for simplicity we consider the case when all emitted photons
have constant energy.

Let $g_\epsilon \in C^2(M;\Sym)$, $\epsilon \in [0,1]$, be a one parameter family of Lorentzian metrics,
and suppose that $g_\epsilon = g$ in $M \setminus M_0$ where
$$M_0 = (t_0, \infty) \times \R^3.$$
Here $t_0$ is as in (\ref{def_sigma}).
We will now describe the CMB measurements on $(M,g_\epsilon)$.
Let $t_1 > t_0$, let $\UU_1 \subset \R^3$ be open and bounded,
and define $\UU = \{t_1\} \times \UU_1$.
To avoid technicalities in the exposition, we assume that 
$g_\epsilon = g$ on $\UU$
and consider the observers $(x, \p_t)$, $x \in \UU$.
We will study more general observers in Section \ref{sec_observers}.

\begin{figure}[t!] 
  \centering
  \includegraphics[scale=1]{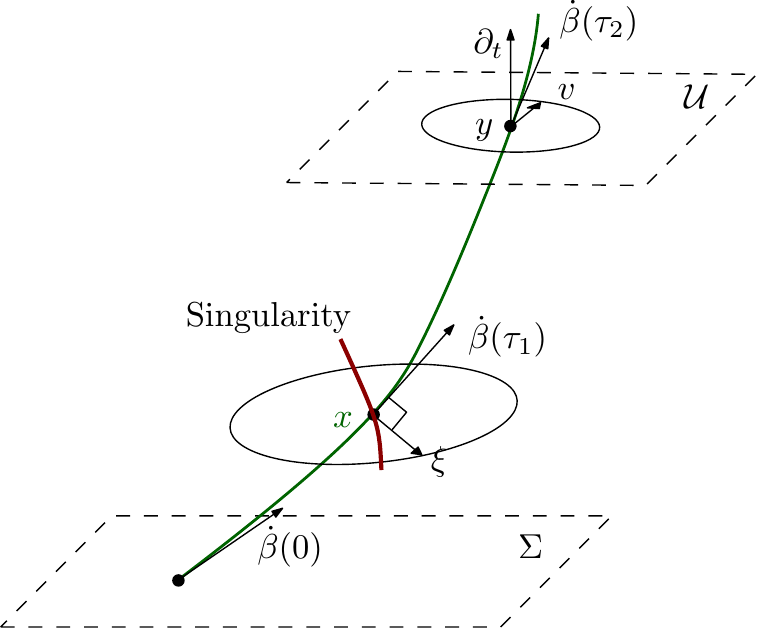}
\caption{The observer $(y, \p_t)$ measures the energy of the photon $\beta$.
The photon has the Newtonian velocity $v$ on the celestial sphere of $(y, \p_t)$.
We show that the measurement carries information on the singularity $(x,\xi) \in \WF(\p_\epsilon g_\epsilon|_{\epsilon = 0})$ that satisfies $\beta(\tau_1) = x$ and $\xi(\dot \beta(\tau_1)) = 0$, see Theorem \ref{th_main}. 
}
\end{figure}

Let $S^2$ be the Euclidean unit sphere in $\R^3$.  We define
the diffeomorphism
\begin{align}
\label{def_theta}
\theta : S^2 \to \mathbb S_{g_\epsilon,x,\p_t} - \p_t,
\quad  
\theta(v) = -(1,a(t_1)^{-1} v) \in \R^{1+3}.
\end{align}
If $g_\epsilon$ is sufficiently close to the background metric $g$
then Lemma \ref{lem_energy} below says that 
$\mathcal S_{g,Z} \UU$ can be parametrized by $\UU_1 \times S^2$,
and the redshifts $\mathcal R_{g_\epsilon, \p_t}$ of CMB photons as measured by $(x,\p_t)$, $x \in \UU$, can be parametrized 
as follows
\begin{align}
\label{def_R}
R_\epsilon(y,v) = (\dot \gamma_\epsilon(\tau_\epsilon(x,\theta); x,\theta), \p_t)_{g_\epsilon} - 1,
\quad (y,v) \in \UU_1 \times S^2,
\end{align}
where $x = (t_1, y)$, $\theta = \theta(v)$, 
$\gamma_\epsilon(\cdot; x, \theta)$
is the geodesic of $(M,g_\epsilon)$ with the initial data $(x,\theta)$, 
and 
\begin{align}
\label{def_tau_eps}
\tau_\epsilon(x,\theta) = \min\{\tau > 0;\ 
\gamma_\epsilon(\tau; x, \theta) \in \Sigma \}.
\end{align}

\begin{lemma}
\label{lem_energy}
Let $g$ be a Lorentzian metric tensor of the form (\ref{background_metric}).
Let $t_1 > t_0$, let $\UU_1 \subset \R^3$ be bounded,
and define $\UU = \{t_1\} \times \UU_1$. 
Let $g_\epsilon \in C^2(M; Sym^2)$, $\epsilon \in [0,1]$,
be a one parameter family of Lorentzian metrics
satisfying $g_0 = g$.
Suppose that the map 
$\epsilon \mapsto g_\epsilon$ is differentiable from $[0,1]$
to $C_b^2(M_0; Sym^2)$
and that $g_\epsilon = g$ in $\UU \cup \Sigma$.
Let $(x,v) \in \UU \times S^2$,
and define $\theta = \theta(v)$ and $\tau_\epsilon = \tau_\epsilon(x,\theta)$ for small $\epsilon$. Let $c > 0$.
Then 
the geodesic 
\begin{align*}
\beta_\epsilon(\rho) = \gamma_\epsilon(- c\rho + \tau_\epsilon; x, \theta),
\end{align*}
satisfies $\beta_\epsilon(0) \in \Sigma$ and 
$\beta_\epsilon(\tau_\epsilon/c) = x$.
In particular, if we choose the parametrization
$c = E_0 / (\dot \gamma_\epsilon(\tau_\epsilon; x, \theta), \p_t)_{g_\epsilon}$,
then
$(\dot \beta_\epsilon(0), \p_t)_{g_\epsilon} = -E_0$ 
and the redshift of $\beta_\epsilon$ as measured by $(x,\p_t)$ satisfies
$$
\mathcal R_{g_\epsilon, \p_t}(x, \theta + \p_t)
= (\dot \gamma_\epsilon(\tau_\epsilon; x, \theta), \p_t)_{g_\epsilon} -1.
$$
\end{lemma}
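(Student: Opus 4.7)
The plan is to verify the three assertions directly from the definitions, using crucially that $g_\epsilon = g$ on both $\Sigma$ and $\UU$.

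First I would check that the initial direction $\theta = -(1, a(t_1)^{-1}v)$ at $x = (t_1,y) \in \UU$ is lightlike for $g_\epsilon$. Because $g_\epsilon$ coincides with $g = -dt^2 + a^2(t) dy^2$ on $\UU$ and $|v| = 1$, the computation gives $(\theta,\theta)_{g_\epsilon} = -1 + a^2(t_1)\, a^{-2}(t_1)|v|^2 = 0$. Consequently $\gamma_\epsilon(\,\cdot\,; x, \theta)$ is a null geodesic of $g_\epsilon$, and the affine reparametrization $\beta_\epsilon$ remains a null geodesic. The two endpoint conditions then follow at once: $\beta_\epsilon(\tau_\epsilon/c) = \gamma_\epsilon(0; x,\theta) = x$, and $\beta_\epsilon(0) = \gamma_\epsilon(\tau_\epsilon; x,\theta) \in \Sigma$ by the definition of $\tau_\epsilon$.

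Next, I would differentiate by the chain rule to obtain $\dot\beta_\epsilon(0) = -c\,\dot\gamma_\epsilon(\tau_\epsilon; x,\theta)$ and $\dot\beta_\epsilon(\tau_\epsilon/c) = -c\,\theta$. On $\Sigma$, where $g_\epsilon = g$, the future pointing unit normal is $\nu = \p_t$, so
$$
(\dot\beta_\epsilon(0), \nu)_{g_\epsilon} = -c\,(\dot\gamma_\epsilon(\tau_\epsilon; x,\theta), \p_t)_{g_\epsilon},
$$
and the given choice $c = E_0/(\dot\gamma_\epsilon(\tau_\epsilon; x,\theta), \p_t)_{g_\epsilon}$ makes this equal $-E_0$. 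The quantity $c$ is positive because on a null curve of a metric close to $g$ the sign of $\dot t$ is conserved via $\dot t^2 = a^2|\dot y|^2$, so $\dot\gamma_\epsilon(\tau_\epsilon)$ retains negative $t$-component, giving $(\dot\gamma_\epsilon(\tau_\epsilon), \p_t)_{g_\epsilon} > 0$.

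For the redshift I would compute at $x$, where $g_\epsilon = g$, using $(\theta,\p_t)_g = \theta^t g_{tt} = 1$:
$$
(\dot\beta_\epsilon(\tau_\epsilon/c), \p_t)_{g_\epsilon} = -c\,(\theta, \p_t)_g = -c.
$$
Plugging the numerator and denominator into (\ref{def_redshift}) with $Z = \p_t$ yields
$$
\mathcal R_{g_\epsilon, \p_t}(x, \theta + \p_t) = \frac{-E_0}{-c} - 1 = (\dot\gamma_\epsilon(\tau_\epsilon; x, \theta), \p_t)_{g_\epsilon} - 1.
$$
The only genuine subtlety I anticipate lies not in these computations but in the tacit well-posedness of $\tau_\epsilon(x, \theta)$ for small $\epsilon$: at $\epsilon = 0$ the backward null geodesic from $(t_1,y)$ hits $\Sigma$ in finite affine parameter by an explicit integration in the FLRW metric, and the assumed $C_b^2$-differentiability of $\epsilon \mapsto g_\epsilon$ yields smooth dependence of the geodesic flow on $\epsilon$, so a standard continuity argument propagates existence of $\tau_\epsilon$ to all $\epsilon$ sufficiently close to $0$.
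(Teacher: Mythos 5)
Your proof is correct and follows essentially the same route as the paper's: you verify the endpoint conditions, differentiate by the chain rule to get $\dot\beta_\epsilon(0)=-c\,\dot\gamma_\epsilon(\tau_\epsilon)$ and $\dot\beta_\epsilon(\tau_\epsilon/c)=-c\theta$, use $(\theta,\partial_t)_{g_\epsilon}=1$ (valid since $g_\epsilon = g$ on $\UU$), and plug into the redshift formula. The paper also defers the well-posedness of $\tau_\epsilon$ to the later Lemma \ref{lem_tau_eps}, just as you do; your extra checks (lightlikeness of $\theta$, positivity of $c$) are harmless elaborations of steps the paper leaves implicit.
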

\begin{proof}
We will show that $\tau_\epsilon$
is well-defined for small $\epsilon > 0$
in Lemma \ref{lem_tau_eps} below.
We have $\beta_\epsilon(0) \in \Sigma$, $\beta_\epsilon(\tau_\epsilon/c) = x$,
$(\dot \beta_\epsilon(0), \p_t)_{g_\epsilon} = -E_0$,
and the redshift is
$$
\frac{(\dot \beta_\epsilon(0), \p_t)_{g_\epsilon}}{(\dot \beta_\epsilon(\tau_\epsilon/c), \p_t)_{g_\epsilon}}
- 1
= 
\frac{-E_0}{-c (\theta, \p_t)_{g_\epsilon}} - 1 
= 
(\dot \gamma_\epsilon(\tau_\epsilon; x, \theta), \p_t)_{g_\epsilon} -1,
$$
since $(\theta, \p_t)_{g_\epsilon} = 1$. 
\end{proof}


We are ready to formulate the linearized version of Problem \ref{IP} that we will consider.

\begin{problem}[Linearized microlocal CMB inverse problem]
\label{IP'}
Given 
$\p_\epsilon R_\epsilon(x,v)|_{\epsilon = 0}$, $(x, v) \in \UU \times S^2$,
determine the wave front set $\WF(\p_\epsilon g_\epsilon|_{\epsilon = 0})$
on $T^* W$ where $W \subset M$ is as in Problem \ref{IP}.
\end{problem}

We characterize the spacelike vectors in $\WF(\p_\epsilon g_\epsilon|_{\epsilon = 0})$
that can be recovered, and show that 
all timelike vectors are smoothed out.
Physically, the timelike vectors in $\WF(\p_\epsilon g_\epsilon|_{\epsilon = 0})$ correspond to singularities moving faster than light,
whence we do not expect such singularities to be present to begin with. 

Lightlike singularities could correspond to gravitational waves. Their
recovery from
the anisotropies in the Cosmic Microwave Background radiation is an interesting question. However, we leave this as a topic for future work. 

%
%
%

\section{Statement of the results}
\label{sec_results}

Let $f$ be a 
symmetric 2-tensor on a Lorentzian manifold $(M,g)$,
and define the light ray transform of $f$ by
\begin{align}
\label{def_X}
X_{g} f(x,\theta) = \int_\R (f_{lm} \circ \gamma) \dot \gamma^l \dot \gamma^m d\tau, 
\end{align}
where $x \in M$, $\theta \in T_x M$ is lightlike, and $\gamma(\tau) = \gamma(\tau; x, \theta)$
is the geodesic of $(M,g)$ with the initial data $(x,\theta)$.
When 
$g$ is of the form (\ref{background_metric}), we define for $t_1 > t_0$,
$$
L_{t_1}f(y,v) = X_g f(x, \theta), \quad y \in \R^3,\ v \in S^2,
$$ 
where $x = (t_1, y)$ and $\theta = \theta(v)$ is defined by (\ref{def_theta}).
Note that for an arbitrary Lorentzian manifold $(M,g)$, the map $X_g$ may fail to be well defined even when $f \in C_0^\infty(M;\Sym)$. This is the case, for example, if $(M,g)$ has closed null geodesics. 
However, we show the following lemma.

\begin{lemma}
\label{L_is_FIO}
The light ray transform $L_{t_1}$
is a Fourier integral operator of order $-3/4$ whose associated canonical transformation is the twisted conormal bundle of the following point-line relation
$$
\{(p, y, v) \in M \times \R^3 \times S^2;\ p = \gamma(\tau; (t_1, y), \theta(v)) \text{ for some $\tau \in \R$}\}.
$$ 
In particular, $L_{t_1}$ is continuous from $\E'(M;\Sym)$
to $\D'(\R^3 \times S^2)$.

\end{lemma}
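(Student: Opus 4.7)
The plan is to exhibit the Schwartz kernel of $L_{t_1}$ as a Lagrangian distribution on $(\R^3 \times S^2) \times M$ whose Lagrangian is the twisted conormal bundle of the point-line relation $Z$, and then to read off the order from a standard dimension count.

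First I would parametrize $Z$ via the smooth map $\Psi(y, v, \tau) = (\Phi(y, v, \tau), y, v)$, where $\Phi(y, v, \tau) = \gamma(\tau; (t_1, y), \theta(v))$, defined on the open set of $(y, v, \tau)$ for which the null geodesic exists. Since the $(y, v)$-components of $\Psi$ are the identity and $\p_\tau \Phi = \dot\gamma \ne 0$, the map $\Psi$ is an injective immersion; global injectivity follows from the warped-product form (\ref{background_metric}), which prevents distinct null geodesics issuing from $\Sigma$ from meeting and rules out self-intersection of any single such geodesic. Hence $Z$ is a smooth $6$-dimensional embedded submanifold, of codimension $3$ in the $9$-dimensional ambient product.

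Second, I would write the Schwartz kernel of $L_{t_1}$ acting on symmetric $2$-tensors as the oscillatory integral
$$K^{lm}(y, v, p) = (2\pi)^{-4} \int_\R \int_{\R^4} e^{i\langle \xi,\, p - \Phi(y, v, \tau)\rangle}\, \dot\gamma^l(\tau)\dot\gamma^m(\tau)\, d\xi\, d\tau,$$
and verify that $\phi(y, v, p, \xi, \tau) = \langle \xi, p - \Phi\rangle$ is a clean, non-degenerate Hörmander phase. The critical set $\{p = \Phi(y, v, \tau),\ \xi(\dot\gamma) = 0\}$ parametrizes the Lagrangian
$\Lambda_\phi = \{(y, v, p;\, -\xi \circ \p_y \Phi,\, -\xi \circ \p_v \Phi,\, \xi) : (p, y, v) \in Z,\ \xi(\dot\gamma) = 0\}$.
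A direct computation of $N^*Z$ confirms $\Lambda_\phi = (N^*Z)'$: tangent vectors to $Z$ take the form $(\delta p, \delta y, \delta v)$ with $\delta p = \p_y \Phi\, \delta y + \p_v \Phi\, \delta v + \dot\gamma\, \delta\tau$, and the vanishing of a covector $(\eta, \omega, \xi)$ on all such vectors forces precisely $\xi(\dot\gamma) = 0$ together with $\eta = -\xi \circ \p_y \Phi$ and $\omega = -\xi \circ \p_v \Phi$, as required.

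Third, the standard order formula $m = k/2 - n/4$ for the conormal distribution arising from a smooth density on a codimension $k$ submanifold of an $n$-dimensional manifold gives $L_{t_1} \in I^{-3/4}$ with $k = 3$ and $n = 9$. The continuity statement $\E'(M;\Sym) \to \D'(\R^3 \times S^2)$ is then a consequence of the general continuity of FIOs, after noting that the projection $Z \to M$ is proper over compact subsets: each null geodesic meets any fixed compact subset of $M$ in a bounded affine-parameter interval, so for $f \in \E'(M;\Sym)$ only a bounded range of $\tau$ contributes to the integral. The main technical obstacle is verifying that $\phi$ is clean and non-degenerate, equivalently that $(\p_y\Phi, \p_v\Phi, \dot\gamma)$ span a $5$-dimensional subspace on the critical set and that no conjugate points occur along the null geodesics. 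Both reduce to an explicit rank check, which is immediate for (\ref{background_metric}): on a fixed time slice $\Phi$ is a smooth scaling of the identity in the spatial coordinates, and the warped-product structure lets the null-geodesic equations be integrated by quadrature, ruling out conjugate points in the relevant region.
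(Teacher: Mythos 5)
Your overall strategy — exhibit the Schwartz kernel of $L_{t_1}$ as a conormal distribution on the point-line relation $Z$ and read off the order from the codimension formula — is the same as the paper's, but the execution has a genuine gap in the middle step. The phase $\phi(y,v,p,\xi,\tau)=\langle\xi,\,p-\Phi(y,v,\tau)\rangle$ is \emph{not} a H\"ormander phase function in the variables $(\xi,\tau)$: it is homogeneous of degree one in $\xi$ but not in $\tau$, so the standard non-degeneracy/criticality machinery of \cite{FIO1} does not apply to it as written. If one nevertheless plugs $N=5$ phase variables into the order formula $m=\mu+N/2-n/4$ with $\mu=0$ one gets $m=1/4$, not $-3/4$ — the fact that you landed on the right number is only because at the last moment you switched to the codimension formula $m=k/2-n/4$ without tying it back to the oscillatory integral you wrote down. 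To make the oscillatory-integral route rigorous you must first eliminate $\tau$: since $\dot\gamma^0\ne 0$ the constraint $p^0=\Phi^0(\tau)$ solves $\tau$ uniquely as a smooth function of $(y,v,p^0)$, and substituting this into the phase reduces it to a genuine homogeneous phase in three frequency variables. This is exactly what the paper accomplishes, more cleanly, by first invoking the conformal invariance to replace $g$ by the Minkowski metric (so that null geodesics become straight lines $s\mapsto(s,y+sv)$), then taking a Fourier transform in $y$ to obtain the kernel $\int_{\R^3}e^{i\eta(y-x'+x^0v)}(2\pi)^{-3}\theta^l\theta^m\,d\eta$, a manifestly valid oscillatory integral with three phase variables and a symbol of order zero, yielding order $0+3/2-9/4=-3/4$ directly from \cite[Def.~3.2.2]{FIO1}.

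Two further soft spots: the claim that ``global injectivity follows from the warped-product form, which prevents distinct null geodesics issuing from $\Sigma$ from meeting'' is more than what you need (only injectivity of each fixed geodesic in $\tau$ matters, and this follows simply because the time coordinate is strictly monotone along null geodesics since $\dot\gamma^0$ never vanishes), but it also claims more than is generally true without argument. Similarly, ruling out conjugate points by saying the geodesic equations ``integrate by quadrature'' is hand-waving; it is true for (\ref{background_metric}) because the spatial slices are flat, but a cleaner route is, again, to use the conformal reduction so that geodesics are literally straight lines and there is nothing to verify. I would recommend reorganizing along those lines: conformal reduction first, then the explicit Minkowski kernel, then read off the Lagrangian and order.
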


The proof of the lemma will be presented in Section \ref{sec_muloc_L}.
The following theorem reduces the linearized CMB inverse problem 
to inversion of a limited angle restriction of the light ray transform.

\begin{theorem}
\label{th_tomography}
Let $t_1 > t_0$, let $\UU_1 \subset \R^3$, and let 
$g_\epsilon \in C^2(M; Sym^2)$, $\epsilon \in [0,1]$,
satisfy the assumptions of Lemma \ref{lem_energy}.
Define 
$$
f(t,x') = 
\begin{cases}
\p_\epsilon g_\epsilon(t,x')|_{\epsilon = 0}, & t \in (t_0, t_1),
\\
0, & \text{otherwise},
\end{cases}
$$
Then on $\UU_1 \times S^2$,
$$
\p_\epsilon R_\epsilon|_{\epsilon = 0} 
= 
\frac{1}{2 a(t_0)} L_{t_1} a^2 \mathcal L_{a \p_t} a^{-2} f,
$$ 
where  
$\mathcal L_{a \p_t}$ is the Lie derivative along the scaled world velocity $a \p_t$,
and the powers $a^2$ and $a^{-2}$ of the warping factor are interpreted as multiplication operators.
\end{theorem}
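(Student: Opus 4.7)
The plan is to exploit the conformal Killing structure of the background metric: a direct coordinate computation gives
\begin{align*}
\mathcal{L}_{a\p_t} g = 2\dot a\, g, \qquad \dot a = a'(t),
\end{align*}
so $a\p_t$ is a conformal Killing field whose conformal factor vanishes when contracted with any $g$-null vector. For any Lorentzian metric $g_\epsilon$ and any affinely parametrized geodesic $\gamma_\epsilon$, the standard identity
\begin{align*}
\frac{d}{d\tau}\, g_\epsilon(\dot\gamma_\epsilon, a\p_t) = \tfrac12 (\mathcal{L}_{a\p_t} g_\epsilon)(\dot\gamma_\epsilon, \dot\gamma_\epsilon)
\end{align*}
holds. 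I would apply this to the null geodesic from $x = (t_1,y)$ with initial velocity $\theta(v)$, integrate from $\tau = 0$ to $\tau = \tau_\epsilon$, and use $g_\epsilon = g$ on $\UU \cup \Sigma$ to evaluate the boundary contributions as $a(t_1)$ at $\tau = 0$ and $a(t_0)(R_\epsilon + 1)$ at $\tau = \tau_\epsilon$. This yields the exact identity
\begin{align*}
a(t_0)(R_\epsilon + 1) - a(t_1) = \tfrac12\int_0^{\tau_\epsilon} (\mathcal{L}_{a\p_t} g_\epsilon)(\dot\gamma_\epsilon, \dot\gamma_\epsilon)\, d\tau.
\end{align*}

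At $\epsilon = 0$ the integrand $(\mathcal{L}_{a\p_t} g)(\dot\gamma,\dot\gamma) = 2\dot a\, g(\dot\gamma,\dot\gamma) = 0$ vanishes identically, which recovers the cosmological redshift $R_0 = a(t_1)/a(t_0) - 1$. Upon differentiating in $\epsilon$, the boundary contribution from $\p_\epsilon \tau_\epsilon$ also vanishes because the integrand at $\tau = \tau_0$ is zero, leaving
\begin{align*}
a(t_0)\,\p_\epsilon R_\epsilon|_{\epsilon=0} = \tfrac12\int_0^{\tau_0} \p_\epsilon\bigl[(\mathcal{L}_{a\p_t} g_\epsilon)(\dot\gamma_\epsilon, \dot\gamma_\epsilon)\bigr]_{\epsilon=0}\, d\tau.
\end{align*}

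The crucial step is expanding $\p_\epsilon$ under the integral. Writing $W = \p_\epsilon\gamma_\epsilon|_{\epsilon=0}$, and using $\mathcal{L}_{a\p_t} g = 2\dot a g$ together with $g(\dot\gamma,\dot\gamma) = 0$, the $\epsilon$-derivative splits into a metric variation $(\mathcal{L}_{a\p_t} f)(\dot\gamma,\dot\gamma)$ and two geodesic variations that combine into $2\dot a\bigl[W^l\p_l g_{ij}\dot\gamma^i\dot\gamma^j + 2g(\dot W,\dot\gamma)\bigr]$. The main obstacle is closing this expression; I would do so by invoking the linearized null condition. Differentiating the exact identity $g_\epsilon(\dot\gamma_\epsilon,\dot\gamma_\epsilon)\equiv 0$ at $\epsilon = 0$ gives
\begin{align*}
W^l\p_l g_{ij}\dot\gamma^i\dot\gamma^j + 2g(\dot W,\dot\gamma) = -f(\dot\gamma,\dot\gamma),
\end{align*}
so the geodesic variations collapse precisely to $-2\dot a\, f(\dot\gamma,\dot\gamma)$, and the integrand becomes $(\mathcal{L}_{a\p_t} f - 2\dot a f)(\dot\gamma,\dot\gamma)$. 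A one-line Leibniz computation checks that $a^2 \mathcal{L}_{a\p_t}(a^{-2} f) = \mathcal{L}_{a\p_t} f - 2\dot a f$, so the integrand equals $(a^2 \mathcal{L}_{a\p_t} a^{-2} f)(\dot\gamma,\dot\gamma)$.

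Finally, since $f$ vanishes for $t\notin(t_0,t_1)$ and is continuous across the boundary (as $g_\epsilon = g$ on $\Sigma \cup \UU$), the integral over $[0,\tau_0]$ along the background null geodesic agrees with the full light-ray integral $L_{t_1}(a^2 \mathcal{L}_{a\p_t} a^{-2} f)(y,v)$, and dividing by $a(t_0)$ yields the stated formula. The only serious subtlety is the cancellation in the penultimate step: the metric and geodesic variation contributions look unrelated, and it is precisely the linearized null identity that combines them into the conformally rescaled Lie derivative appearing in the theorem.
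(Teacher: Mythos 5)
Your argument is correct, and it takes a genuinely different route from the paper's. Both proofs ultimately rest on the same geometric input, namely that $a\p_t = \p_s$ is a conformal Killing field for $g$ and a Killing field for the conformal metric $\tilde g = a^{-2}g$, but you deploy it in the original frame whereas the paper conformally rescales first. Concretely, the paper passes to the $(s,x')$ coordinates in which $\tilde g$ is a constant Minkowski tensor, differentiates the geodesic equation $\p_\rho(\tilde g_{\epsilon,jk}\dot\mu_\epsilon^k) = \tfrac12(\p_{x^j}\tilde g_{\epsilon,lm})\dot\mu_\epsilon^l\dot\mu_\epsilon^m$ in $\epsilon$, and uses constancy of $\tilde g$ to make the geodesic-variation terms drop out automatically. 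You instead derive the exact nonperturbative identity
\begin{align*}
a(t_0)(R_\epsilon+1)-a(t_1)=\tfrac12\int_0^{\tau_\epsilon}(\mathcal L_{a\p_t}g_\epsilon)(\dot\gamma_\epsilon,\dot\gamma_\epsilon)\,d\tau
\end{align*}
directly from the conservation law $\tfrac{d}{d\tau}g_\epsilon(\dot\gamma_\epsilon,a\p_t)=\tfrac12(\mathcal L_{a\p_t}g_\epsilon)(\dot\gamma_\epsilon,\dot\gamma_\epsilon)$ and then linearize. That linearization does produce geodesic-variation terms (in $W$ and $\dot W$), which you eliminate with the differentiated null condition $f(\dot\gamma,\dot\gamma)+W^l\p_l g_{ij}\dot\gamma^i\dot\gamma^j+2g(\dot W,\dot\gamma)=0$, a clean device the paper never needs. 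Your approach has the advantage of isolating a finite, exact Sachs--Wolfe type formula before linearizing, and of making the conformal weight $a^2\cdot a^{-2}$ in the statement transparent via the Leibniz identity $a^2\mathcal L_{a\p_t}(a^{-2}f)=\mathcal L_{a\p_t}f-2\dot a f$; the paper's rescaling avoids the linearized null condition but keeps the exact identity implicit. Both derivations are valid, and the regularity needed to justify your differentiations (existence of $W=\p_\epsilon\gamma_\epsilon|_{\epsilon=0}$, of $\p_\epsilon\tau_\epsilon$, and the vanishing of the boundary contribution at $\tau=\tau_0$) is exactly what Lemma \ref{lem_tau_eps} supplies, so you should cite it explicitly as the paper does.
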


The light ray transform has a non-trivial null space.

\begin{lemma}
Let $g$ be a Lorentzian metric tensor of the form (\ref{background_metric})
and let $t_1 > 0$. 
Define $M_1 = (0,t_1) \times \R^3$ and 
$$
\mathcal N = \{ cg + d^s \omega;\ c \in \E'(M_1),\ 
\omega \in \E'(M_1; \L1)\},
$$
where  $d^s$ is the symmetric differential defined in coordinates as follows
\begin{align}
\label{def_potential_field}
(d^s \omega)_{ij} = \frac {(\nabla_i \omega)_j + (\nabla_j \omega)_i} 2.
\end{align}
Here $\nabla_{i} = \nabla_{\p_{x^i}}$
is the covariant derivative on $(M,g)$. 
Then $L_{t_1} f = 0$ for all $f \in \mathcal N$.
\end{lemma}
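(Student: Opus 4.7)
The plan is to verify the two building blocks of $\mathcal{N}$ separately: conformal multiples $cg$ and symmetric differentials $d^s\omega$. In both cases the computation is essentially pointwise along the null geodesic, and the distributional case follows by density combined with the continuity of $L_{t_1}$ from Lemma \ref{L_is_FIO}.

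First I would handle $f = cg$. For any null geodesic $\gamma$ of $(M,g)$ parametrizing the fiber of $L_{t_1}$, the integrand in \eqref{def_X} is
$$
(cg)_{lm}(\gamma(\tau))\,\dot\gamma^l(\tau)\dot\gamma^m(\tau) = c(\gamma(\tau))\,(\dot\gamma(\tau),\dot\gamma(\tau))_g \equiv 0,
$$
since $\dot\gamma$ is lightlike all along $\gamma$. Hence $L_{t_1}(cg) = 0$ when $c$ is smooth, and by the continuity statement in Lemma \ref{L_is_FIO} this extends to $c \in \E'(M_1)$ by approximating with $c * \phi_j$ for a smooth mollifier $\phi_j$.

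Next I would handle $f = d^s\omega$. By symmetry of the contraction against $\dot\gamma^l\dot\gamma^m$,
$$
(d^s\omega)_{lm}\,\dot\gamma^l \dot\gamma^m = (\nabla_l \omega_m)\,\dot\gamma^l\dot\gamma^m.
$$
Since $\gamma$ is a geodesic, $\nabla_{\dot\gamma}\dot\gamma = 0$, so
$$
\frac{d}{d\tau}\bigl(\omega_m(\gamma(\tau))\,\dot\gamma^m(\tau)\bigr) = (\nabla_{\dot\gamma}\omega)_m\,\dot\gamma^m + \omega_m\,\nabla_{\dot\gamma}\dot\gamma^m = (\nabla_l \omega_m)\,\dot\gamma^l\dot\gamma^m.
$$
Therefore the integrand in \eqref{def_X} is a total $\tau$-derivative along $\gamma$, and for smooth compactly supported $\omega$,
$$
X_g (d^s\omega)(x,\theta) = \bigl[\omega_m(\gamma(\tau))\,\dot\gamma^m(\tau)\bigr]_{\tau=-\infty}^{\tau=+\infty} = 0,
$$
since $\omega$ vanishes outside a compact set of $M_1$. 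Hence $L_{t_1}(d^s\omega) = 0$ for smooth $\omega$, and again by continuity this extends to $\omega \in \E'(M_1; \L1)$.

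Combining the two cases gives $L_{t_1} f = 0$ for every $f = cg + d^s\omega \in \mathcal{N}$. The only place where one has to be a little careful is the distributional extension: one needs to check that mollification in $M_1$ commutes with both the algebraic multiplication by $g$ and the symmetric differential $d^s$ well enough to apply the continuous operator $L_{t_1}$. This is routine because $d^s$ is a first-order differential operator with smooth coefficients (the Christoffel symbols of $g$ are smooth on $M_1$), so $d^s(\omega * \phi_j) \to d^s \omega$ in $\E'(M_1;\Sym)$, and similarly for $cg$. There is no genuine obstacle beyond this bookkeeping.
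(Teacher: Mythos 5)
Your proof is correct and follows the same approach as the paper: handle $cg$ by observing $(\dot\gamma,\dot\gamma)_g=0$ along null geodesics, handle $d^s\omega$ by recognizing the integrand as the total $\tau$-derivative of $\omega_m\dot\gamma^m$ (which vanishes at $\pm\infty$ by compact support), and extend from $C_0^\infty$ to $\E'$ by density together with the continuity of $L_{t_1}$ from Lemma~\ref{L_is_FIO}. The paper's proof is more terse but contains exactly these same three ingredients.
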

\begin{proof}
By density of $C_0^\infty$ in $\E'$, we may assume that 
$c \in C^\infty_0(M_1)$ and that 
$\omega \in C^\infty_0(M_1; \L1)$,
and consider 
$f = cg + d^s \omega$.
Then $L_{t_1} f = 0$ on $\R^3 \times S^2$, since
$(\dot \gamma, \dot \gamma)_g = 0$
holds for null geodesics $\gamma$ and 
$$
\p_\tau \omega(\dot \gamma(\tau))
= (d^s \omega)_{ij} \dot \gamma^i(\tau) \dot \gamma^j(\tau)
$$ 
holds for all geodesics $\gamma$.
\end{proof}

In fact, this lemma holds for every Lorentzian metric as far as the null-geodesics are non-trapping. 
Note that the group of transformations $g \mapsto g + h$, $h \in \mathcal N$,
is the linearization of the gauge invariance of the nonlinear problem, see the discussion after Problem \ref{IP}.

The lemma is an analog of the corresponding results for compact Riemannian manifolds, where it is known that the geodesic ray transform vanishes on potential fields, i.e., on tensors of order $m\ge1$ which are symmetric differentials of tensor fields of order $m-1$ vanishing on the boundary, see  \cite{Sharafutdinov1994}. What is new here is the scalar multiple $cg$ of the metric. 


Our main result is the following.

\begin{theorem}
\label{th_main}
Let $g$ be a Lorentzian metric tensor of the form (\ref{background_metric}).
Let $t_1 > 0$, let $\UU_1 \subset \R^3$ be open,
and define $M_1 = (0,t_1) \times \R^3$.
Let $(x,\xi) \in T^* M_1$ be spacelike, and
suppose that there is a null geodesic $\gamma$ of $(M,g)$ and $\tau_1, \tau_2\in \R$ such that 
\begin{align}
\label{visibility_cond}
\gamma(\tau_1) = x, \quad \xi(\dot \gamma(\tau_1)) = 0,
\quad \gamma(\tau_2) \in \{t_1\} \times \UU_1.
\end{align}
Then there is $\chi \in S^0(T^*(\R^3 \times S^2))$
supported in $T^*(\UU_1 \times S^2)$
such that 
for all $f \in \E'(M;\Sym)$ the following are equivalent
\begin{itemize}
\item[(i)]  $(x,\xi) \in \WF(L_{t_1}^* \chi L_{t_1} f)$,
\item[(ii)] $(x,\xi) \in \WF(f + h)$
for all $h \in \mathcal N$.
\end{itemize}
\end{theorem}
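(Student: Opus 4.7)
The plan is to follow the standard tomographic strategy by analyzing the microlocalized normal operator $N_\chi := L_{t_1}^* \chi L_{t_1}$. By Lemma \ref{L_is_FIO}, $L_{t_1}$ is an FIO of order $-3/4$ whose canonical relation is the twisted conormal bundle of the point--line incidence. The visibility condition (\ref{visibility_cond}) singles out a point $(y_0,v_0,\eta_0,\omega_0)\in T^*(\UU_1\times S^2)\setminus 0$ that corresponds to $(x,\xi)$ under this canonical relation: $(y_0,v_0)$ parametrizes the light ray $\gamma$ at time $\tau_2$, and $(\eta_0,\omega_0)$ is the unique fiber covector such that $(x,\xi)$ is conormal to $\gamma$ at $\tau_1$. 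I would choose $\chi\in S^0$ supported in a small conic neighborhood of $(y_0,v_0,\eta_0,\omega_0)$ and elliptic at that point, small enough that the canonical relation is a local canonical graph on the support of $\chi$ (a transversality check that uses $\xi$ spacelike). Clean FIO composition then makes $N_\chi$ a classical pseudodifferential operator of order $-3/2$ on a conic neighborhood of $(x,\xi)$.

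The implication (i)$\Rightarrow$(ii) is then immediate: the preceding lemma gives $L_{t_1}h=0$ for every $h\in\mathcal N$, hence $N_\chi f=N_\chi(f+h)$, and pseudolocality of $N_\chi$ yields $\WF(N_\chi f)\subset\WF(f+h)$. For the converse, the task is to identify the principal symbol $p(x,\xi)$ of $N_\chi$, regarded as an endomorphism of the space of symmetric $(0,2)$-tensors at $x$. Each admissible null $\theta\in T_xM$ with $\xi(\theta)=0$ contributes, via the stationary phase formula for the FIO composition, the rank-one endomorphism $f\mapsto f(\theta,\theta)\,\theta^\flat\odot\theta^\flat$, weighted by $|\chi|^2$ at the corresponding parameter and by the Jacobian of the light-ray parametrization; integrating over the one-parameter family of such $\theta$'s (nonempty precisely because $\xi$ is spacelike) assembles the full symbol $p(x,\xi)$.

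The central algebraic claim is then that
\[
\ker p(x,\xi)=\{c\,g(x)+\xi\odot\omega:c\in\C,\ \omega\in T^*_xM\},
\]
with $\xi\odot\omega$ the symmetrized tensor product. Indeed $f\in\ker p(x,\xi)$ iff $f(\theta,\theta)=0$ for every null $\theta\in\ker\xi$; since $\xi$ is spacelike, $\ker\xi$ is a Lorentzian hyperplane whose null cone spans it, and polarization of this quadratic vanishing pins $f$ down as a scalar multiple of $g(x)$ plus a term $\xi\odot\omega$. Because the principal symbol of $d^s$ sends $\omega$ to $i\,\xi\odot\omega$, this kernel coincides exactly with the principal-symbol image of $\mathcal N$ at $(x,\xi)$, so $N_\chi$ is elliptic on the quotient. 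A microlocal parametrix $Q$ then satisfies $QN_\chi f = f + h_f$ microlocally near $(x,\xi)$ modulo smoothing, with $h_f\in\mathcal N$ modulo smoothing; hence $(x,\xi)\notin\WF(N_\chi f)$ forces $(x,\xi)\notin\WF(f+h_f)$, contradicting (ii). The main obstacle is precisely this kernel identification together with the accompanying explicit symbol computation: both are delicate because of the conformal weights entering the parametrization $\theta(v)$ and the Lie-derivative preprocessing of $f$ appearing in Theorem \ref{th_tomography}, which is why the paper devotes Sections~10--11 to them.
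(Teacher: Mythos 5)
Your overall strategy matches the paper's: microlocalize away from lightlike covectors, compute the principal symbol of the normal operator, identify its kernel with the symbol image of $\mathcal N$, and invert on a complement. But there are two concrete problems.

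First, the statement that $\chi$ can be chosen ``small enough that the canonical relation is a local canonical graph on the support of $\chi$'' is false, and the resulting order $-3/2$ for the normal operator is wrong. The projection $\pi\colon N^*Z'\to T^*\MM$ has one-dimensional (circle) fibers over every spacelike $(x,\xi)$ — this is Lemma~\ref{lem_WFA}(2) — and no conic shrinking of the support of $\chi$ changes that. The composition $(N^*Z')^{-1}\circ N^*Z'$ is clean with excess~$1$, not transversal, so the normal operator has order $2\cdot(-3/4)+1/2=-1$. Your own description (``integrating over the one-parameter family of such $\theta$'s'') is exactly the excess-one contribution, and is inconsistent with the claim of a graph composition. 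The paper establishes cleanness and computes the excess in Lemmas~\ref{lem_rho_lightlike}--\ref{lem_clean_intersection} and then obtains order $-1$ in Section~\ref{sec_microlocal_normalop}; the resulting explicit symbol in Proposition~\ref{prop_principal_symbol} carries the expected degree-$-1$ factor $(|\xi'|^2-|\xi_0|^2)^{-1/2}$.

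Second, the passage from ``$N_\chi$ is elliptic on the quotient'' to the wavefront-set equivalence is left as a gesture. Since $N_\chi$ has nontrivial symbol kernel at every $(x,\xi)$, it has no microlocal parametrix, and one cannot directly write $QN_\chi f=f+h_f$ with $h_f\in\mathcal N$. The paper's remedy is to build an explicit order-zero pseudodifferential projection $P$ (formula~(\ref{the_projection})) satisfying $\ker P=\ker N$ and $PN=NP=N$, then observe that $W=c^{-1}L^*\chi_1\chi_2 L+(1-P)$ is elliptic near $(x,\xi)$ and use its parametrix to obtain $P=Qc^{-1}L^*\chi_1\chi_2 L P$ microlocally; Lemma~\ref{lem_microlocal_nullsp} converts membership in $\WF(Pf)$ into statement~(ii). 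These are precisely the steps your proposal labels ``delicate'' without supplying. A smaller misattribution: the Lie-derivative $\mathcal L_{a\p_t}$ preprocessing belongs to Theorem~\ref{th_tomography}, not to the present theorem; what handles the FLRW warping factor here is the conformal invariance of Section~6, which reduces matters to the Minkowski metric before any symbol is computed.

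Your identification of $\ker p(x,\xi)$ with $\{cg(x)+\xi\odot\omega\}$ is the right statement (Lemma~\ref{lem_nullspace}), but note the argument must first extend the vanishing $f(\theta,\theta)=0$ from the small arc of $S^1_\xi$ where $\chi$ lives to the full circle (polynomiality in $v$) before the polarization/Lorentz-boost argument applies.
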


In particular, the theorem says that the operator $L_{t_1}^* \chi L_{t_1}$
does not move spacelike singularities. 
We will give an explicit choice of $\chi$ in Section \ref{sec_microlocal_normalop}
and show that $L_{t_1}^* \chi L_{t_1}$ is a pseudodifferential operator of order $-1$. We will also compute its principal symbol after a conformal scaling, see Proposition \ref{prop_principal_symbol} below.

Note that the visibility condition (\ref{visibility_cond}) is analogous with the 
visibility condition for limited angle X-ray tomography, see e.g. \cite[Th. 3.1]{Quinto1993}.
It is also sharp in the sense that if for a spacelike
$(x,\xi) \in T^* M$ there does not exist a null geodesic $\gamma$
satisfying (\ref{visibility_cond}) then
for all $\chi \in C_0^\infty(\UU_1)$
the wave front set $\WF(\chi L_{t_1} f)$ is empty
if $\WF(f)$ is contained in a small conical neighborhood  
of $(x, \xi)$ in $T^* M$.

We will also show that all timelike singularities are lost in the sense that if $\WF(f)$ contains only timelike covectors then $\WF(L_{t_1} f)$
is empty. In other words, $L_{t_1}$ is smoothing on the cone of timelike covectors.


\section{Parametrization of the CMB measurements}
\label{sec_observers}

In this section we show that the function $\tau_\epsilon$, see (\ref{def_tau_eps}),
is well-defined. We do this in a slightly more general context than that in Lemma \ref{lem_energy},
that is, we relax the assumption that $g_\epsilon = g$ on $\UU$, and consider observers $(x,Z_\epsilon)$ on $\UU$ where $Z_\epsilon$ is not necessarily $\p_t$.

\def\WW{\mathcal W}
Let $g_\epsilon \in C^2(M; Sym^2)$, $\epsilon \in [0,1]$,
be a one parameter family of Lorentzian metrics
satisfying $g_0 = g$, where $g$ is a Lorentzian metric tensor of the form (\ref{background_metric}).
Let $\WW_\epsilon \subset M$ be a spacelike submanifold 
of codimension one, and let 
$Z_\epsilon$ be a section of $TM$ defined on $\WW_\epsilon$
such that $(x,Z_\epsilon)$ 
is an observer with respect to $g_\epsilon$.
Let $\UU_1 \subset \R^3$ be open and suppose that 
$\Phi_\epsilon: \UU_1 \to \WW_\epsilon$ is a diffeomorphism.
In this section we consider measurements by the observers
\begin{align}
\label{more_general_observers}
(\Phi_\epsilon(y), Z_\epsilon), \quad y \in \UU_1.
\end{align}
We assume that $\WW_0 = \{t_1\} \times \UU_1$
where $t_1 > t_0$, $\Phi_0(y) = (t_1, y)$,
and that $Z_0 = \p_t$.

For small $\epsilon > 0$, the CMB measurements by observers $(x, Z_\epsilon)$
can be parametrized by the vectors in the celestical spheres
$\mathbb S_{g_\epsilon, x, Z_\epsilon}$, $x \in \WW_\epsilon$.
Indeed, we have the following lemma: 

\begin{lemma}
\label{lem_tau_eps}
Let $M \subset \R^{n}$ be open, let $g$ be a smooth Lorentzian metric tensor on $M$, let $Z$ be a smooth vector field on $M$, and suppose that $(x,Z)$ is an observer for all $x \in M$.  
Let $K_0 \subset M$ be compact
and define 
$$
K = \{(x,V-Z) \in TM;\ x \in K_0,\ V \in \mathbb S_{g,x,Z}\}.
$$
Let $\Sigma \subset M$ 
be a smooth submanifold of codimension one, and suppose that 
the geodesics
$\gamma(\cdot; x, \theta)$, $(x,\theta) \in K$, 
intersect $\Sigma$ non-tangentially. 
Then there are neighborhoods $\mathcal G \subset C_b^2(M; \Sym)$ of $g$ and
$\mathcal K \subset TM$ of $K$ 
such that the geodesics
$\gamma_h(\cdot; x, \theta)$,
$(x,\theta) \in \mathcal K$,
with respect to $h \in \mathcal G$ intersect 
$\Sigma$. Moreover, the function 
$$
\tau_h(x,\theta) = \min \{\tau > 0;\ \gamma_h(\tau; x,\theta) \in \Sigma\} 
$$
is $C^1$ in $\mathcal K \times \mathcal G$,
and the functions $\gamma_h(\tau;x,\theta)$
and $\dot \gamma_h(\tau;x,\theta)$
are $C^1$ in a neighborhood of the set 
$$
\{(\tau, x, \theta, h);\ \tau \in [0, \tau_h(x,\theta)],\ (x,\theta) \in \mathcal K,\ h \in \mathcal G\}.
$$
\end{lemma}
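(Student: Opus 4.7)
The plan is to recast the problem as a parametrized ODE, apply the standard theorem on smooth dependence on a Banach-valued parameter, then extract the first-hit function $\tau_h$ via the implicit function theorem, and finally patch things together by compactness.

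First, I would write the geodesic equation for a Lorentzian metric $h$ as the first-order system
\[
\dot x^i = v^i, \qquad \dot v^i = -\Gamma^i_{jk}(h)(x)\, v^j v^k,
\]
on $TM$. The map $h \mapsto \Gamma(h)$ sends an open neighborhood $\mathcal G_0 \subset C_b^2(M;\Sym)$ of $g$, restricted to Lorentzian tensors with non-degenerate determinant, into $C_b^1(M)$, and is smooth there since it is a rational function of $h$ and $\p h$ with non-vanishing denominator. Hence the right-hand side $F(x,v;h)$ of the system is jointly $C^1$ in $(x,v,h)$ with the finite-dimensional variables $(x,v)$ and the Banach-space variable $h$. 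The standard ODE theorem for equations depending $C^1$-smoothly on a Banach-valued parameter then yields that the flow
\[
(\tau,x,\theta,h) \mapsto \bigl(\gamma_h(\tau;x,\theta),\, \dot \gamma_h(\tau;x,\theta)\bigr)
\]
is jointly $C^1$ on any relatively compact open subset of its domain of definition; in particular it is $C^1$ on a neighborhood of $\{(\tau,x,\theta,g) : (x,\theta) \in K,\ 0 \le \tau \le \tau_g(x,\theta)\}$, where $\tau_g(x,\theta)$ denotes the first hit of $\Sigma$ for the unperturbed geodesic.

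Next, I would apply the implicit function theorem to define $\tau_h$ locally. Fix $(x_0,\theta_0)\in K$ and pick a smooth local defining function $\Phi$ for $\Sigma$ near $\gamma_g(\tau_g(x_0,\theta_0); x_0,\theta_0)$, so that $\Sigma = \{\Phi = 0\}$ locally and $d\Phi \ne 0$ there. The non-tangential intersection hypothesis reads
\[
\p_\tau \Phi\bigl(\gamma_g(\tau;x_0,\theta_0)\bigr)\big|_{\tau=\tau_g(x_0,\theta_0)} = d\Phi\bigl(\dot\gamma_g(\tau_g(x_0,\theta_0);x_0,\theta_0)\bigr) \ne 0,
\]
which together with the joint $C^1$ regularity of the flow lets the IFT produce a $C^1$ function $(x,\theta,h) \mapsto \tau_h(x,\theta)$ on a product neighborhood of $(x_0,\theta_0,g)$ solving $\Phi(\gamma_h(\tau_h;x,\theta)) = 0$; the image point then lies on $\Sigma$ because $\Phi = 0$ cuts out $\Sigma$ locally. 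By compactness of $K$, a finite subcover gives local pieces $\tau_h^{(j)}$ that agree on overlaps by the uniqueness clause of the IFT, hence glue to a single $C^1$ function $\tau_h$ on $\mathcal K \times \mathcal G$, where $\mathcal G$ is the intersection of the finitely many Banach neighborhoods and $\mathcal K$ is an open neighborhood of $K$ in $TM$.

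It remains to identify the $\tau_h$ so produced with the minimum in the statement. For $h = g$ and $(x,\theta) \in K$, the segment $\gamma_g([0,\tau_g(x,\theta)-\delta];x,\theta)$ is compact and disjoint from $\Sigma$ for all small $\delta > 0$, by the definition of $\tau_g$ as the first hit. Shrinking $\mathcal G$ and $\mathcal K$, continuity of the flow preserves this separation uniformly, so the locally produced $\tau_h$ coincides with $\min\{\tau > 0: \gamma_h(\tau;x,\theta)\in\Sigma\}$; the asserted $C^1$ regularity of $\gamma_h$ and $\dot\gamma_h$ on the set in question is then an immediate consequence of the joint regularity of the flow. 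The main technical obstacle is the Banach-valued smooth-dependence step: although $h \mapsto \Gamma(h)$ loses a derivative, working with $h \in C_b^2$ and requiring only $C^1$ output makes this a clean application of the parametrized ODE theorem in Banach spaces. The remainder of the proof is standard packaging of the implicit function theorem, a compactness patching argument, and a continuity argument identifying the IFT solution with the first hitting time.
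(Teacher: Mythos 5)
Your proposal is correct and follows essentially the same route as the paper: recast the geodesic equation as a first-order system whose right-hand side is jointly $C^1$ in $(x,\theta,h)$ with $h$ varying in the Banach space $C_b^2(M;\Sym)$, invoke the Banach-space $C^1$-flow theorem, apply the implicit function theorem at each $(x_0,\theta_0)\in K$ using non-tangentiality, and patch by compactness. You add a brief justification that the IFT-produced $\tau_h$ agrees with the first-hitting time, a point the paper leaves implicit, but the core argument is the same.
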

\def\Dom{\text{Dom}}
\begin{proof}
Let $\mathcal G_0 \subset C_b^2(M; \Sym)$ 
be a neighborhood of $g$ such that all $h \in \mathcal G_0$ 
are Lorentzian.
We define the open set $U = TM \times \mathcal G_0$ on the Banach space $E = TM \times C_b^2(M; \Sym)$, and consider the vector field 
$F : U \to E$,
$F(x,\theta, h) = (\theta, f(x, \theta, h), 0)$,
where $f : E \to \R^n$ has components 
$$
f^j(x, \theta, h) = -\Gamma_{k\ell}^j(x,h)\theta^k \theta^\ell, 
\quad (x,\theta) \in TM,\ h \in C_b^2(M; \Sym),
$$
and $\Gamma_{k\ell}^j(x,h)$ are the Christoffel symbols of the
metric tensor $h$ at $x$.
We recall that 
$$
\Gamma_{k\ell}^j(x,h) = 
\frac{1}{2}h^{jm} \left(\frac{\partial h_{mk}}{\partial x^\ell} + \frac{\partial h_{m\ell}}{\partial x^k} - \frac{\partial h_{k\ell}}{\partial x^m} \right),
$$
where $h^{jm}$ is the inverse of $h_{jk}$, and see that 
$\Gamma_{k\ell}^j : M \times \mathcal G_0 \to \R$ 
is continuously differentiable.
Hence $F$ is continuously differentiable, and
the flow of $F$ is continuously differentiable 
on its domain of definition $\Dom(F) \subset \R \times U$,
see e.g. \cite[Th. 6.5.2]{Lang}.
Note that the projection of the flow on $TM$ gives the geodesic flow with respect to the metric tensor $h$.

Let $\kappa = (x,\theta) \in K$.
We have assumed that $(\tau_g(\kappa), \kappa, g) \in \Dom(F)$.
As $\gamma(\cdot;\kappa)$ intersects $\Sigma$ non-tangentially,
the implicit function theorem gives neighborhoods 
$\mathcal K_\kappa \subset TM$
of $\kappa$
and $\mathcal G_\kappa \subset C_b^2(M; \Sym)$ of $g$
such that 
$\tau_h$ is $C^1$ on $\mathcal K_\kappa \times \mathcal G_\kappa$.
The set $K$ is compact since $K_0$ is compact and $S_{g,x,Z}$
is diffeomorphic to the sphere $S^{n-2}$.
We choose a finite set $J \subset K$ such that 
$\mathcal K_\kappa$, $\kappa \in J$, is an open cover of $K$,
and define $\mathcal K = \cup_{\kappa \in J} \mathcal K_\kappa$
and $\mathcal G = \cap_{\kappa \in J} \mathcal G_\kappa$.
\end{proof}

Let us now continue our study of the observers (\ref{more_general_observers}).
We choose a smooth family of diffeomorphisms 
$\Psi_{x,\epsilon} : S^2 \to \mathbb S_{g_\epsilon, x, Z}$, 
$x \in \WW_\epsilon$, such that 
$$
\Psi_{x,0}(v) = (0, -a(t_1)^{-1} v) \in \R^{1+3},
\quad x \in \WW_0.
$$
Then the redshift measurements can be parametrized as
$$
\tilde R_\epsilon(y,v) = (\gamma_\epsilon(\tau_\epsilon(x_\epsilon,\theta_\epsilon); x_\epsilon, \theta_\epsilon), Z_\epsilon)_{g_\epsilon}, \quad (y,v) \in \UU_1 \times S^2,
$$
where $x_\epsilon = \Phi_\epsilon(y)$, $\theta_\epsilon(x,v) = \Psi_{x_\epsilon,\epsilon}(v) - Z_\epsilon$,
$\gamma_\epsilon(\cdot; x, \theta)$ is the geodesic of $(M,g_\epsilon)$
with the initial data $(x, \theta)$,
and $\tau_\epsilon$ is defined by (\ref{def_tau_eps}). 
Note that $\theta_0 = \theta$ where $\theta$ is defined by (\ref{def_theta}).

We omit writing $\epsilon$ as a subscript when it is zero. 
The linearized measurements satisfy 
$$
\p_\epsilon \tilde R_\epsilon(y,v)|_{\epsilon = 0}
= \p_\epsilon R_\epsilon(y,v)|_{\epsilon = 0}
+ \p_\epsilon (\gamma(\tau(x_\epsilon,\theta_\epsilon); x_\epsilon, \theta_\epsilon), Z_\epsilon)_{g}|_{\epsilon = 0},
$$
where the second term is a smooth function on $\UU_1 \times S^2$
if the derivatives at $\epsilon = 0$ of $Z_\epsilon$ and the local parametrizations $\Phi_\epsilon$ and $\Psi_{x,\epsilon}$
 are smooth on $\UU_1 \times S^2$. 
Thus the second term plays no role in the microlocal analysis of $\p_\epsilon \tilde R_\epsilon(x,v)|_{\epsilon = 0}$
and we are reduced to the case covered in Theorem \ref{th_tomography}.

\section{Conformal invariance}

In this section we show that Theorem \ref{th_main}
is invariant under conformal scaling.
Let us recall that a metric tensor $g$ of the form (\ref{background_metric})
is conformal to the Minkowski metric tensor in suitable coordinates. 
Indeed, we define the strictly increasing function 
\begin{align}
\label{conformal_time}
s(t) := \int_0^t a(t)^{-1} dt.
\end{align}
Then 
$$
g(s, x') = a^2(s) (-ds^2 + (dx')^2), \quad (s,x') \in (0,\infty) \times \R^3.
$$
For example, the Einstein-de Sitter model corresponds to $a(t) = t^{2/3}$,
and then
$s(t) = 3 t^{1/3}$ and $a(s) = s^2/9$.

It is well known that null geodesics are invariant up to a reparametrization under a conformal change of the metric. For the reader's convenience we give a proof here. 

\begin{lemma}
\label{lem_confinv_nullgeod}
Let $g$ be a smooth Lorentzian metric tensor, let $a$ be a smooth strictly positive function, and define $\tilde g = a^{-2} g$.
Then the null geodesics $\gamma$ of $g$ 
correspond to the null geodesics $\mu$ of $\tilde g$
under the reparametrization
\begin{align}
\label{reparametrization}
\gamma(\tau) = \mu(\sigma(c \tau)), \quad \tau(\sigma) = \int_0^\sigma 
a(\mu(\rho))^2
d\rho,
\end{align}
where $c = a(\gamma(0))^2$, $\tau(\sigma)$ is a strictly increasing function and $\sigma(\tau)$ is its inverse function.
The reparametrization is chosen so that $\gamma(0) = \mu(0)$ and $\dot \gamma(0) = \dot \mu(0)$.
\end{lemma}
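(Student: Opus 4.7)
The plan is to compare the geodesic equations of the two conformally related metrics directly, using the standard formula relating their Christoffel symbols, and then to exploit the null condition to obtain a reparametrization relation between the geodesics.

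First I would write $\tilde g = e^{2\omega} g$ with $\omega = -\log a$ and recall the conformal transformation of the Christoffel symbols:
$$
\tilde\Gamma^{k}_{ij} = \Gamma^{k}_{ij} + \delta^{k}_{i}\,\omega_{,j} + \delta^{k}_{j}\,\omega_{,i} - g_{ij}\,g^{k\ell}\omega_{,\ell}.
$$
Let $\mu(\sigma)$ be an affinely parametrized null geodesic of $\tilde g$, and write $\mu'^k = d\mu^k/d\sigma$. Since $\mu$ is $\tilde g$-null it is also $g$-null, i.e.\ $g_{ij}\mu'^i\mu'^j = 0$, so the last term in the formula above drops out when contracted with $\mu'^i\mu'^j$. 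The $\tilde g$-geodesic equation $\mu''^{k} + \tilde\Gamma^{k}_{ij}\mu'^{i}\mu'^{j} = 0$ therefore reduces to the pre-geodesic equation for $g$:
$$
\mu''^{k} + \Gamma^{k}_{ij}(\mu)\mu'^{i}\mu'^{j} = -2\bigl(\omega_{,i}\mu'^{i}\bigr)\mu'^{k} = \frac{2}{a(\mu)}\,\frac{d\,a(\mu(\sigma))}{d\sigma}\,\mu'^{k}.
$$

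Next I would set $\gamma(\tau) = \mu(\sigma(c\tau))$ and compute, using the chain rule,
$$
\ddot\gamma^{k} + \Gamma^{k}_{ij}(\gamma)\dot\gamma^{i}\dot\gamma^{j} = \bigl(c\,\sigma'(c\tau)\bigr)^{2}\bigl[\mu''^{k} + \Gamma^{k}_{ij}\mu'^{i}\mu'^{j}\bigr] + c^{2}\sigma''(c\tau)\,\mu'^{k}.
$$
Substituting the pre-geodesic identity, the condition that $\gamma$ be an affine $g$-geodesic becomes
$$
\sigma''(c\tau) = -\frac{2}{a(\mu)}\,\frac{d\,a(\mu)}{d\sigma}\bigl(\sigma'(c\tau)\bigr)^{2}.
$$
This is precisely the relation produced by inverting $\tau(\sigma) = \int_{0}^{\sigma} a(\mu(\rho))^{2}\,d\rho$: differentiating $\sigma'\cdot(d\tau/d\sigma)=1$ once more in $\tau$ and using $d\tau/d\sigma = a(\mu)^{2}$ gives the displayed identity, so the reparametrization in the statement does convert $\mu$ into an affine $g$-geodesic.

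Finally I would check the initial data. By construction $\sigma(0)=0$, hence $\gamma(0)=\mu(0)$. Differentiating,
$$
\dot\gamma(0) = c\,\sigma'(0)\,\mu'(0) = \frac{c}{a(\mu(0))^{2}}\,\mu'(0),
$$
so the choice $c = a(\gamma(0))^{2} = a(\mu(0))^{2}$ yields $\dot\gamma(0)=\dot\mu(0)$, as claimed. Conversely, starting from a null $g$-geodesic $\gamma$, the same computation run in reverse together with the inverse reparametrization produces a null $\tilde g$-geodesic $\mu$, establishing the bijective correspondence. The only place where the argument could go subtly wrong is forgetting that the pre-geodesic equation above relies crucially on the null condition, without which the $g_{ij}g^{k\ell}\omega_{,\ell}$ term would survive; this is the single observation that makes the whole lemma work.
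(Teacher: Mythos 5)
Your proof is correct and follows essentially the same route as the paper's: you derive the pre-geodesic (non-affine geodesic) equation for a $\tilde g$-null geodesic with respect to $g$ using the conformal change of Christoffel symbols and the null condition, and then verify that the stated reparametrization absorbs the acceleration term. The paper phrases the first step invariantly via the Koszul formula for $\nabla_X Y - \tilde\nabla_X Y$ rather than in Christoffel-symbol coordinates, but the computation is the same.
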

\begin{proof}
We define $\alpha$ by $e^\alpha = a$.
Koszul formula implies 
\begin{align}
\label{covariantd_tildeg}
\nabla_X Y = \tilde \nabla_X Y + (X \alpha) Y + (Y \alpha) X - (X,Y)_{\tilde g} \text{grad}_{\tilde g} \alpha,
\end{align}
where $\text{grad}_{\tilde g}$ is the gradient with respect to the metric $\tilde g$.
We apply this formula to $X = Y = \dot \mu$, where $\mu$ is a null geodesic with respect to $\tilde g$. Then the first and the last term on the right hand side vanish and we have
$$
D_\sigma \dot \mu = 2 (\alpha \circ \mu)' \dot \mu,
$$
where $D_\sigma \dot \mu$ is the covariant derivative of $\dot \mu$ with respect to $g$ along $\mu$, and the prime denotes the derivative of a real valued function on $\R$.
We have $\tau' = e^{2 \alpha \circ \mu}$, $\sigma' = e^{-2 \alpha \circ \mu \circ \sigma}$ and 
\begin{align*}
\sigma'' = -2(\alpha \circ \mu \circ \sigma)' \sigma' = -2((\alpha \circ \mu)' \circ \sigma) (\sigma')^2.
\end{align*}
Let us consider the curve $\gamma$ defined by the formula (\ref{reparametrization}) with $c=1$.
Then the covariant derivative $D_\tau \dot \gamma$ with respect to $g$  along $\gamma$ satisfies
\begin{align*}
D_\tau \dot \gamma &= D_\tau (\sigma' (\dot \mu \circ \sigma))
\\&= \p_\tau (\sigma' (\dot \mu^k \circ \sigma))\p_k
+ (\sigma')^2  (\dot \mu^i \circ \sigma)(\dot \mu^j \circ \sigma) (\Gamma^{k}_{ij} \circ \mu \circ \sigma) \p_k
\\&= \sigma'' (\dot \mu \circ \sigma)
+ (\sigma')^2  (D_\sigma \dot \mu) \circ \sigma = 0.
\end{align*}
Here $\Gamma^{k}_{ij}$ are the Christoffel symbols of $g$.
We see that $\gamma$ is a geodesic with respect to the metric $g$.
The same is true for any $c \ne 0$ since affine reparametrizations of geodesics are geodesics. 
Moreover, 
\begin{align}
\label{conformal_reparametrization}
\dot \gamma(\tau) 
= c a^{-2}(\gamma(\tau)) \dot \mu(\rho), \quad \rho = \sigma(c\tau).
\end{align}
\end{proof}

\begin{corollary}
Let $g$ be a smooth Lorentzian metric tensor, let $a$ be a smooth strictly positive function, and define $\tilde g = a^{-2} g$.
Then the corresponding light ray transforms satisfy
$X_g f = a^2 X_{\tilde g} a^{-2} f$.
\end{corollary}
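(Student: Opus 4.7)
The plan is to deduce the identity from Lemma \ref{lem_confinv_nullgeod} by a direct change of variables in the integral defining $X_g$. Fix a point $x\in M$ and a vector $\theta\in T_xM$ that is lightlike with respect to $g$ (hence also with respect to $\tilde g$, since null cones are conformally invariant). Let $\gamma(\tau)=\gamma(\tau;x,\theta)$ be the null geodesic of $g$ with these initial data, and $\mu(\rho)=\mu(\rho;x,\theta)$ the null geodesic of $\tilde g$ with the same initial data, so $\gamma(0)=\mu(0)=x$ and $\dot\gamma(0)=\dot\mu(0)=\theta$.

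Next, I invoke Lemma \ref{lem_confinv_nullgeod} with the constant $c=a^2(\gamma(0))=a^2(x)$. This gives the reparametrization $\gamma(\tau)=\mu(\sigma(c\tau))$, where $\sigma$ is the inverse of $\tau(\sigma)=\int_0^\sigma a^2(\mu(\rho))\,d\rho$, together with the velocity identity
$$
\dot\gamma(\tau)\;=\;c\,a^{-2}(\gamma(\tau))\,\dot\mu(\rho),\qquad \rho=\sigma(c\tau),
$$
coming from \eqref{conformal_reparametrization}. In particular, differentiating the definition of $\sigma$ yields $d\rho/d\tau=c\,\sigma'(c\tau)=c\,a^{-2}(\mu(\rho))$.

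The final step is a change of variables. Substituting the expressions above into \eqref{def_X} one gets
\begin{align*}
X_g f(x,\theta)
&=\int_\R f_{lm}(\gamma(\tau))\,\dot\gamma^l(\tau)\dot\gamma^m(\tau)\,d\tau\\
&=\int_\R f_{lm}(\mu(\rho))\,c^2 a^{-4}(\mu(\rho))\,\dot\mu^l(\rho)\dot\mu^m(\rho)\,d\tau.
\end{align*}
Passing to $\rho$ as the variable of integration replaces $d\tau$ by $c^{-1} a^2(\mu(\rho))\,d\rho$, and the three powers of $a$ collapse to a single $a^{-2}(\mu(\rho))$:
\begin{align*}
X_g f(x,\theta)
&= c\int_\R a^{-2}(\mu(\rho))\,f_{lm}(\mu(\rho))\,\dot\mu^l(\rho)\dot\mu^m(\rho)\,d\rho\\
&= a^2(x)\,X_{\tilde g}(a^{-2}f)(x,\theta),
\end{align*}
which is precisely the asserted formula $X_g f=a^2\,X_{\tilde g}\,a^{-2}f$, with $a^2$ interpreted as the multiplication operator at the basepoint $x$ of the initial data.

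The computation is essentially routine once Lemma \ref{lem_confinv_nullgeod} is in hand; the only thing to watch is the bookkeeping of powers of $a$ coming from the three distinct sources (two velocity factors, the Jacobian of $\rho\mapsto\tau$, and the evaluation of $f$), which I expect to be the only, and very mild, potential pitfall.
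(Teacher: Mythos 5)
Your proof is correct and follows exactly the paper's (very terse) argument: the authors simply note ``We set $\rho = \sigma(c\tau)$ and use (\ref{conformal_reparametrization}) and $d\tau = c^{-1}a^2\,d\rho$,'' which is precisely the change of variables and velocity identity you carry out in full. The bookkeeping of the powers of $a$ ($a^{-4}$ from the two velocity factors, $a^{+2}$ from the Jacobian, leaving $a^{-2}$ to absorb into $f$, and $c=a^2(x)$ outside) checks out, and your identification of the leftover $a^2$ as multiplication at the basepoint matches the intended reading.
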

\begin{proof}
We set $\rho = \sigma(c \tau)$ and
use (\ref{conformal_reparametrization}) and $d\tau = c^{-1} a^2 d\rho$.
\end{proof}

\begin{lemma}
Let $g$ be a smooth Lorentzian metric tensor, let $a$ be a smooth strictly positive function, and define $\tilde g = a^{-2} g$.
Let $c$ be a smooth function and $\omega$ be a smooth 1-tensor. 
Then 
$$
a^{-2}( c g + d^s \omega ) = \tilde c \tilde g + \tilde d^s \tilde \omega,
$$
where $\tilde c = c + (\tilde \omega, d\log a)_{\tilde g}$
and $\tilde \omega = a^{-2} \omega$.
Here $\tilde d^s$ is the symmetric differential with respect to $\tilde g$.
\end{lemma}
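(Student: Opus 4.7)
Since $a^{-2}cg = c \tilde g$, it suffices to show the identity for the $d^s\omega$ part, namely
\begin{equation*}
a^{-2} d^s \omega \;=\; \tilde d^s \tilde\omega + (\tilde\omega, d\log a)_{\tilde g}\,\tilde g,
\end{equation*}
for then one collects the scalar multiples of $\tilde g$ and reads off $\tilde c = c + (\tilde\omega, d\log a)_{\tilde g}$. The whole argument is a direct index computation using the conformal transformation law for the Christoffel symbols.

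Set $\alpha = \log a$, so $\tilde g = e^{-2\alpha}g$. The standard conformal rule gives
\begin{equation*}
\tilde\Gamma^{k}_{ij} \;=\; \Gamma^{k}_{ij} - \delta^k_i\,\p_j\alpha - \delta^k_j\,\p_i\alpha + g_{ij}\,g^{k\ell}\p_\ell\alpha,
\end{equation*}
which the reader can verify from Koszul's formula (or simply quote from the previous lemma's computation). Apply $\tilde\nabla_i$ to $\tilde\omega_j = e^{-2\alpha}\omega_j$: using the product rule on $\p_i(e^{-2\alpha}\omega_j)$ and substituting the formula above for $\tilde\Gamma^k_{ij}\tilde\omega_k$, after collecting terms one obtains
\begin{equation*}
\tilde\nabla_i \tilde\omega_j \;=\; e^{-2\alpha}\bigl[\nabla_i\omega_j + \omega_i\,\p_j\alpha - \omega_j\,\p_i\alpha - g_{ij}(\omega,d\alpha)_g\bigr].
\end{equation*}
(The coefficient of $\omega_j\p_i\alpha$ comes out as $-2+1=-1$ after combining the derivative of $e^{-2\alpha}$ with the $\delta^k_j\p_i\alpha$ piece of $\tilde\Gamma$.)

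The crucial observation is that the antisymmetric-looking combination $\omega_i\p_j\alpha - \omega_j\p_i\alpha$ vanishes under symmetrization in $(i,j)$. Thus
\begin{equation*}
(\tilde d^s\tilde\omega)_{ij} \;=\; e^{-2\alpha}\bigl[(d^s\omega)_{ij} - g_{ij}(\omega,d\alpha)_g\bigr] \;=\; a^{-2}(d^s\omega)_{ij} - \tilde g_{ij}(\omega,d\alpha)_g.
\end{equation*}
Finally, one checks that $(\omega,d\alpha)_g = (\tilde\omega,d\alpha)_{\tilde g}$: indeed $\tilde g^{k\ell} = a^2 g^{k\ell}$ and $\tilde\omega_k = a^{-2}\omega_k$, so the two scalings cancel. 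Replacing $(\omega,d\alpha)_g$ by $(\tilde\omega, d\log a)_{\tilde g}$ and solving for $a^{-2}d^s\omega$ gives the displayed identity, which combined with $a^{-2}cg = c\tilde g$ yields the lemma.

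There is no serious obstacle here; the only thing to be careful about is tracking the coefficients in $\tilde\nabla_i\tilde\omega_j$ so that the asymmetric $\omega_i\p_j\alpha-\omega_j\p_i\alpha$ piece appears and is killed by symmetrization, and noticing the conformal invariance of the pairing $(\omega,d\alpha)_g$.
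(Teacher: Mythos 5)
Your proof is correct and follows essentially the same route as the paper: write $\alpha=\log a$, use the conformal transformation law for the Christoffel symbols, and compute $\tilde d^s\tilde\omega$ by a direct index calculation, checking at the end that the pairing $(\omega,d\alpha)_g=(\tilde\omega,d\alpha)_{\tilde g}$ is conformally invariant. The only cosmetic difference is that you compute $\tilde\nabla_i\tilde\omega_j$ in one pass, so the cancellation shows up as the antisymmetric term $\omega_i\p_j\alpha-\omega_j\p_i\alpha$ dying under symmetrization, whereas the paper splits the same computation into two intermediate displays (relating $d^s\omega$ to $\tilde d^s\omega$, then $\tilde d^s(a^{-2}\omega)$ to $\tilde d^s\omega$) and lets the symmetric terms $\omega_i\p_j\alpha+\omega_j\p_i\alpha$ cancel on substitution.
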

\begin{proof}
We define $\alpha$ by $e^\alpha = a$,
and write (\ref{covariantd_tildeg}) in coordinates.
This gives
the following formula for the Christoffel symbols $\Gamma^k_{ij}$
and $\tilde \Gamma^k_{ij}$ of $g$ and $\tilde g$, respectively,
$$
\Gamma^k_{ij} = \tilde \Gamma^k_{ij}+ \delta^k_i\partial_j\alpha + \delta^k_j\partial_i\alpha-b^k\tilde g_{ij} ,
$$
where $b^k = (\text{grad}_{\tilde g}\alpha)^k$.
The symmetric derivative transforms as
$$
(d^s \omega)_{ij} = 
\frac 1 2 \left( \p_i \omega_j + \p_j \omega_i \right) - \Gamma^k_{ij} \omega_k
= 
(\tilde d^s \omega)_{ij} - \omega_i \partial_j\alpha - \omega_j \partial_i\alpha
+ \omega_k b^k\tilde g_{ij}.
$$
Moreover, 
$$
e^{2 \alpha} (\tilde d^s (e^{-2 \alpha} \omega))_{ij}
= (\tilde d^s \omega)_{ij} - \omega_j \partial_i\alpha - \omega_i \partial_j\alpha.
$$
\end{proof}

Let $g$ be of the form (\ref{background_metric})
and define $\tilde g = a^{-2} g$.
Let $x \in M$ and let $\xi \in T^* M$ be spacelike for $g$.
Then $\xi$ is spacelike for $\tilde g$.
Suppose that $\gamma$ is a null geodesic of $(M,g)$ and that there are $0 < \tau_1 < \tau_2$ such that (\ref{visibility_cond}) holds.
Let $\mu$ be the reparametrization (\ref{reparametrization})
of $\gamma$. Then $\mu$ is a null geodesic of $(M, \tilde g)$
and there are $\rho_1, \rho_2 \in \R$ such that 
\begin{align*}
\mu(\rho_1) = x, \quad \xi(\dot \mu(\rho_1)) = 0,
\quad \gamma(\rho_2) \in \{t_1\} \times \UU_1.
\end{align*}

Suppose now that Theorem \ref{th_main} holds for $\tilde g$, that is,
there is a symbol of order zero $\chi$  
supported in $T^* (\UU_1 \times S^2)$ such that 
for all $\tilde f \in \E'(M;\Sym)$ the following holds:
$$
(x,\xi) \in \WF(\tilde L_{t_1}^* \chi \tilde L_{t_1} \tilde f)
\quad \text{iff}\quad 
(x,\xi) \in \WF(\tilde f + \tilde c \tilde g + \tilde d^s \tilde \omega)
$$
for all $\tilde c \in \D'(M)$ and $\tilde \omega \in \D'(M; \L1)$.
Here $\tilde L_{t_1}(y,v) = X_{\tilde g}(x, \theta)$,
where $x = (t_1, y)$ and $\theta = \theta(v)$ is defined by (\ref{def_theta}). 
Note that  $L_{t_1} =a^2(t_1) \tilde L_{t_1} a^{-2}$, and 
$$ 
L_{t_1}^* \chi L_{t_1} = a^4(t_1) a^{-2} \tilde L_{t_1}^* \chi \tilde L_{t_1} a^{-2}.
$$

Let $f \in \E'(M;\Sym)$, $c \in \D'(M)$
and $\omega \in \D'(M; \L1)$, and let us define $\tilde f = a^{-2} f$,
$\tilde \omega = a^{-2} \omega$, and 
$\tilde c = c + (\tilde \omega, d \log a)_{\tilde g}$.
Then 
$$
a^{-2}(f + cg + d^s \omega) = \tilde f + \tilde c \tilde g + \tilde d^s \tilde \omega,
$$
and  
\begin{align*}
(x,\xi) \in \WF(L_{t_1}^* \chi L_{t_1} f),
\quad &\text{iff} \quad
(x,\xi) \in \WF(\tilde L_{t_1}^* \chi \tilde L_{t_1} \tilde f)
\\
(x,\xi) \in \WF(f + cg + d^s \omega)
\quad &\text{iff} \quad
(x,\xi) \in \WF(\tilde f + \tilde c \tilde g + \tilde d^s \tilde \omega).
\end{align*}
Thus Theorem \ref{th_main} holds for $g$ if it holds for $\tilde g$.

\section{Derivation of the tomography problem}
\def\ll{\left}
\def\rr{\right}

In this section we prove Theorem \ref{th_tomography}.
The proof is similar to that in \cite{SWo}.
All the differentiations below are justified by Lemma \ref{lem_tau_eps}.
We omit writing $\epsilon = 0$ as a subscript.

Let $t_1 > t_0$, let $\UU_1 \subset \R^3$, and 
define $\UU = \{t_1\} \times \UU_1$.
Let $(x,v) \in \UU \times S^2$ and write $\theta = \theta(x,v)$.
Note that $\p_t = a^{-1} \p_s$ where $s$ is defined by (\ref{conformal_time}). 
We have
$$
\p_\epsilon (\dot \gamma_\epsilon(\tau_\epsilon(x,v); x, \theta), \p_t)_{g_\epsilon}
= 
a^{-1}(t_0) \p_\epsilon (\dot \gamma_\epsilon(\tau(x,v); x, \theta), \p_s)_{g_\epsilon},
$$
since $\ddot \gamma = 0$ and therefore
\begin{align*}
\p_\epsilon (\dot \gamma(\tau_\epsilon(x,v); x, \theta), \p_s)_{g}
&=
\p_\tau (\dot \gamma(\tau; x, \theta), \p_s)_{g}|_{\tau = \tau_\epsilon(x,v)}
\p_\epsilon \tau_\epsilon(x,v)
\\&=
(\ddot \gamma(\tau_\epsilon(x,v); x, \theta), \p_s)_{g}\p_\epsilon \tau_\epsilon(x,v)
= 0.
\end{align*}
Moreover, 
$$
\p_\epsilon (\dot \gamma_\epsilon(0; x, \theta), \p_s)_{g_\epsilon}
= \p_\epsilon (\theta, \p_s)_g = 0,
$$
since $g_\epsilon = g$ in $\UU$.
Hence
\begin{align}
\label{data_as_diff}
a(t_0) \p_\epsilon R(x,v) = \p_\epsilon \left( (\dot \gamma_\epsilon(\tau(x,v); x, \theta), \p_s)_{g_\epsilon} - (\dot \gamma_\epsilon(0; x, \theta), \p_s)_{g_\epsilon} \right).
\end{align}

We use the reparametrization (\ref{reparametrization})
and recall the formula (\ref{conformal_reparametrization}). Then
\begin{align}
\label{gamma_mu}
(\dot \gamma_\epsilon(\tau; x, \theta), \p_s)_{g_\epsilon}
=
c (\dot \mu_\epsilon(\rho; x, \theta), \p_s)_{\tilde g_\epsilon},
\end{align}
where $c = a(x)^2$, $\rho = \sigma(c\tau)$, $\tilde g_\epsilon = a^{-2} g_\epsilon$
and $\mu_\epsilon(\rho; x, \theta)$ is the geodesic of $(M,\tilde g_\epsilon)$
with initial data $(x, \theta)$. 

The computations below will use the fact that 
$\tilde g = -ds^2 + (dx')^2$
is a constant tensor in the $(s,x')$ coordinates.
When computing in coordinates 
we write $(s,x') = (x^0, x^1, x^2, x^3)$,
and use also the notation $\mu_\epsilon(\rho) = \mu_\epsilon(\rho; x, \theta)$.

The geodesic equations for $\mu_\epsilon$ can be written in the form
\begin{align}
\label{geod_eq}
\p_\rho (\tilde g_{\epsilon,jk} \dot \mu_\epsilon^k) 
=  \frac 1 2 \ll(\p_{x^j} \tilde g_{\epsilon,lm} \rr) \dot \mu_\epsilon^l \dot \mu_\epsilon^m.
\end{align}
Indeed, the equation (\ref{geod_eq}) can be transformed to the equation,
$$
\p_\rho^2 \mu_\epsilon^j + \tilde \Gamma^j_{\epsilon,lm} \dot \mu_\epsilon^l \dot \mu_\epsilon^m = 0,
$$
written in terms of the Christoffel symbols $\tilde \Gamma^j_{\epsilon,lm}$ of the metric 
$\tilde g_\epsilon$,
by using the chain rule on the left hand side and symmetrizing the obtained sum, 
see e.g. \cite[p. 51]{Taylor1996}.

We denote $h = \p_\epsilon \tilde g_\epsilon|_{\epsilon = 0}$
and have
\begin{align*}
\p_\epsilon \ll( \ll(\p_{x^j} \tilde g_{\epsilon,lm} \rr) \dot \mu_\epsilon^l \dot \mu_\epsilon^m \rr)|_{\epsilon = 0}
&= \ll(\p_{x^j} \p_\epsilon \tilde g_{\epsilon,lm} |_{\epsilon = 0} \rr) \dot \mu^l \dot \mu^m
+ \ll(\p_{x^j} \tilde g_{lm} \rr) \p_\epsilon \ll(\dot \mu_\epsilon^l \dot \mu_\epsilon^m \rr)|_{\epsilon = 0}
\\&= \ll( \p_{x^j} h_{lm} \rr) \dot \mu^l \dot \mu^m,
\end{align*}
where we have used the fact that $\tilde g$ is a constant tensor.
Hence 
\begin{align*}
\p_\rho \p_\epsilon (\dot \mu_\epsilon, \p_s)_{\tilde g_\epsilon}|_{\epsilon = 0} 
= \p_\epsilon \p_\rho \ll( \tilde g_{\epsilon,0k} \dot \mu_\epsilon^k \rr)|_{\epsilon = 0}
= \frac 1 2 \ll( \p_{x^0} h_{lm} \rr) \dot \mu^l \dot \mu^m.
\end{align*}
We integrate this with respect to $\rho$ and obtain
\begin{align}
\label{integration_of_geod}
&\frac 1 2 \int _0^r (\p_{s} h_{lm}) \dot \mu^l \dot \mu^m d\rho 
= 
\p_\epsilon (\dot \mu_\epsilon(r), \p_s)_{\tilde g_\epsilon}|_{\epsilon = 0}
- \p_\epsilon (\dot \mu_\epsilon(0), \p_s)_{\tilde g_\epsilon}|_{\epsilon = 0},	
\end{align}
where $r > 0$.
Now (\ref{integration_of_geod}), (\ref{data_as_diff}) and (\ref{gamma_mu}) imply
$$
c^{-1} a(t_0) \p_\epsilon R(x,v)|_{\epsilon = 0}
= \frac 1 2 \int _0^{\sigma_0} (\p_{s} h_{lm}) \dot \mu^l \dot \mu^m d\rho,
$$
where $\sigma_0 = \sigma(c\tau(x,v))$.

Let $t_1 > 0$, let $f$ be the cutoff of $\p_\epsilon g_\epsilon|_{\epsilon = 0}$ as in Theorem \ref{th_tomography},
and suppose that $x \in \UU_1$. 
Then $h = a^{-2} f$ on $\mu([0, \sigma_0])$ and $f = 0$
on $\mu(\R \setminus [0, \sigma_0])$.
Note also that 
$
(\mathcal L_{a \p_t} h)_{lm} = (\mathcal L_{\p_s} h)_{lm} = \p_s h_{lm}.
$
Hence 
$$
\p_\epsilon R(x,v)|_{\epsilon = 0}
= \frac{c}{2 a(t_0)} X_{\tilde g} \mathcal L_{a \p_t} a^{-2} f
= \frac{1}{2 a(t_0)} X_g a^2 \mathcal L_{a \p_t} a^{-2} f,
$$
and we have proven Theorem \ref{th_tomography}.

\section{Backprojection of the light ray transform}
\label{sec_backprojection}

\def\S{{S^2}}
Let us now turn to the proof of Theorem \ref{th_main}.
By the conformal invariance, we may assume that the background metric is the Minkowski metric, that is, $a = 1$ identically. 
If we have data on all the light rays, i.e.,  if $\UU_1 = \R^3$,
then the light ray transform $L_{t_1}$ is invariant with respect to translations. 
Although the case $\UU_1 = \R^3$ is physically unrealistic, 
we discuss it briefly in this section since it allows for very explicit computations.

After a translation in the $t$ coordinate, 
we may suppose that $t_1 = 0$. 
We write $L = L_{0}$, and have
\begin{align}
\label{def_L_translated}
Lf(y,v) = \int_\R f_{lm}(s, y + sv) \theta^l \theta^m ds,
\quad y \in \R^3,\ v \in S^2,
\end{align}
where $\theta = (1, v) \in \R^{1+3}$. Note that 
we have made change of variable $s = -\tau$
in comparison with (\ref{def_X}),
and also use $\theta = -\theta(v)$ in comparison with (\ref{def_theta}).

We use the Euclidean surface measures in all the integrations below.
Let $f, h \in C_0^\infty(\R^4; \Sym)$. Then 
\begin{align*}
&(Lf, Lh)_{L^2(\R^3 \times S^2)}
\\&\quad= \int_{S^2} \int_\R \int_{\R^3} \int_\R
f_{lm}(r,y+rv) h_{jk} (s,y+sv) \theta^j \theta^k \theta^l \theta^m
ds dy dr dv
\\&\quad= \int_{S^2} \int_\R \int_{\R^4}
f_{lm}(r,x'+(r-x^0)v) h_{jk} (x) \theta^j \theta^k \theta^l \theta^m
dx dr dv,
\end{align*}
where we have used the fact that
\begin{align}
\label{change_of_var_sy}
(s, y) \mapsto (s, y + s v) = x = (x^0, x')
\end{align}
is a linear isometry on $\R^{1+3}$.
After making the change of variables $\rho = x^0 - r$, we see that
and the normal operator $L^* L$ is the
convolution $K^{jklm} * f_{lm}$, 
with the kernel
\begin{align*}
(K^{jklm}, \phi)_{\D' \times \D(\R^4)} 
&=
\int_{\S} \int_\R \theta^j \theta^k \theta^l \theta^m \phi(\rho \theta) d\rho dv.	
\end{align*}

\def\SS{\mathcal S}

\def\aa{\mathfrak a}

The following lemma is in the spirit of the representation of the ray transform in 
\cite{Sharafutdinov1994, SU-Duke} if we allow for a singular weight on the light cone. 

\begin{lemma}
\label{lem_Fourier}
The Fourier transform of $K^{jklm}$ is a locally integrable function and satisfies
$$
\widehat K^{jklm}(\xi) =
\begin{cases}
 2 \pi (|\xi'|^2 - |\xi_0|^2)^{-1/2}\int_{S^1_\xi} \theta^j \theta^k \theta^l \theta^m dv, & \text{$\xi$ is spacelike},
\\ 0, & \text{otherwise}.
\end{cases}
$$
Here $\xi = (\xi_0, \xi') \in \R^{1+3}$, $\theta = (1,v)$, $|\cdot|$ is the Euclidean norm,
and
\begin{equation}\label{S1}
S^1_\xi = \{v \in S^2;\ \xi_0 + \xi' v = 0 \}
\end{equation}
 is a circle of radius 
$|\xi'|^{-1} \sqrt{|\xi'|^2 - |\xi_0|^2}$.
\end{lemma}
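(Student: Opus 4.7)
My approach is to compute $\widehat K^{jklm}$ by duality against a Schwartz test function, reducing the sphere-times-line integral to a spherical integral of a delta distribution, and then evaluating the latter via the coarea formula on $S^2$. First, $K^{jklm}$ is tempered: for $\phi \in \SS(\R^4)$ the rapid decay of $\phi$ gives $\int_\R |\phi(\rho\theta)|\,d\rho \le C$ uniformly in $v \in S^2$, so the double integral defining $(K^{jklm}, \phi)$ converges absolutely. By Parseval and Fubini, followed by the distributional identity $\int_\R e^{-i\rho\,\theta\cdot\xi}\,d\rho = 2\pi\delta(\theta\cdot\xi)$,
\begin{align*}
(\widehat K^{jklm}, \phi)
&= \int_{S^2}\theta^j\theta^k\theta^l\theta^m \int_\R \widehat\phi(\rho\theta)\,d\rho\,dv \\
&= 2\pi \int_{\R^4}\phi(\xi)\,T^{jklm}(\xi)\,d\xi,\qquad
T^{jklm}(\xi) := \int_{S^2}\theta^j\theta^k\theta^l\theta^m\,\delta(\xi_0 + v\cdot\xi')\,dv.
\end{align*}
It therefore remains to evaluate the inner spherical integral $T^{jklm}(\xi)$ and to check that it is given by a locally integrable function of $\xi$.

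Next, I would write $g(v) = \xi_0 + v\cdot\xi'$ and compute the tangential gradient $\nabla_{S^2} g(v) = \xi' - (\xi'\cdot v)\,v$, so that $|\nabla_{S^2} g(v)|^2 = |\xi'|^2 - (\xi'\cdot v)^2$; on the zero set $S^1_\xi$, where $\xi'\cdot v = -\xi_0$, this equals $|\xi'|^2 - \xi_0^2$. When $\xi$ is spacelike, $S^1_\xi$ is a smooth circle on which $|\nabla_{S^2} g|$ is bounded below, and the coarea formula on $S^2$ yields
$$
T^{jklm}(\xi) = (|\xi'|^2 - \xi_0^2)^{-1/2} \int_{S^1_\xi}\theta^j\theta^k\theta^l\theta^m\,d\ell(v),
$$
which is exactly the claimed formula; the circle $S^1_\xi$, being the slice of $S^2$ by the affine plane at signed distance $-\xi_0/|\xi'|$ from the origin, has radius $|\xi'|^{-1}\sqrt{|\xi'|^2 - \xi_0^2}$. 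When $\xi$ is timelike, Cauchy--Schwarz gives $|v\cdot\xi'| \le |\xi'| < |\xi_0|$, so $g$ has no zero on $S^2$ and $T^{jklm}$ vanishes on the open timelike cone. The lightcone itself has Lebesgue measure zero in $\R^4$ and therefore does not affect $\widehat K^{jklm}$ as a distribution.

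Finally, I would verify local integrability of the function on the right-hand side. Near the lightcone the arc length of $S^1_\xi$ scales like $\sqrt{|\xi'|^2 - \xi_0^2}/|\xi'|$, so the bounded integrand gives $\int_{S^1_\xi}\theta^j\theta^k\theta^l\theta^m\,d\ell(v) = O(\sqrt{|\xi'|^2 - \xi_0^2}/|\xi'|)$, which exactly cancels the square-root singularity coming from $(|\xi'|^2 - \xi_0^2)^{-1/2}$; away from the lightcone the expression is smooth. Hence the stated function is locally $L^1$, and by the pointwise identifications on the open spacelike and timelike cones it represents the distribution $\widehat K^{jklm}$ globally.

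The main technical obstacle is making the coarea evaluation of $T^{jklm}$ rigorous as $\xi$ approaches the lightcone, where $|\nabla_{S^2} g|$ degenerates. I would handle this by first testing against $\phi$ supported in the open spacelike (respectively, timelike) cone to establish the pointwise formula there, and then invoking the uniform local $L^1$ bound above to conclude that the distribution $\widehat K^{jklm}$ on all of $\R^4$ is represented by the stated locally integrable function, the lightcone being a null set.
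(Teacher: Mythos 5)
Your main computation---reducing via $\int_\R e^{-i\rho s}\,d\rho = 2\pi\delta(s)$ to the spherical integral against $\delta(\xi_0 + v\cdot\xi')$ and evaluating it by the coarea formula on $S^2$---is essentially the paper's (it is formula (\ref{kappaxi_pullback})), and the weighted circle integral, the radius of $S^1_\xi$, and the vanishing on timelike covectors all come out correctly. The gap is in the final step. You assert that since the light cone has Lebesgue measure zero in $\R^4$ it ``does not affect $\widehat K^{jklm}$ as a distribution.'' This is not a valid principle: a distribution supported on a Lebesgue null set (for instance a surface measure carried by the light cone) is nonzero, so knowing $\widehat K^{jklm}$ on the two open cones, together with local $L^1$ control of the candidate density, does not rule out a singular term of $\widehat K^{jklm}$ concentrated on $\{|\xi_0|=|\xi'|\}$. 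The ``uniform local $L^1$ bound'' you sketch does not close the gap either: the pairing $(\widehat K^{jklm},\phi)$ involves restricting $\phi$ to hyperplanes, so the estimate one can extract is at best of Radon-measure type, and a Radon measure can perfectly well carry a singular part on a null set.

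The paper closes this with two further steps that you would need to add. First, it tests against $\phi$ supported in $\{\xi'\ne 0\}$, passes to spherical coordinates on $S^2$ with pole $-\xi'/|\xi'|$, and substitutes $\xi_0=|\xi'|\cos\alpha$; this produces the explicit density $|\xi'|^{-1}\, 1_C(\xi)\int_0^{2\pi}\theta^j\theta^k\theta^l\theta^m\,d\beta$, which has no singularity at $|\xi_0|=|\xi'|$ and so identifies $\widehat K^{jklm}$ with an $L^1_{\mathrm{loc}}$ function on the entire open set $\{\xi'\ne 0\}$, light cone included. Second, the only point of $\{\xi'=0\}$ outside the open timelike cone (where $\widehat K^{jklm}$ is already known to vanish) is the origin; any distribution supported there is a combination of derivatives of $\delta$ and hence homogeneous of degree at most $-4$, while $K^{jklm}$ is homogeneous of degree $-3$, so $\widehat K^{jklm}$ is homogeneous of degree $-1$, which is incompatible. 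Some argument along these lines is needed to legitimately promote your pointwise identification on the open cones to a global one.
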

\begin{proof}
To simplify the notation, we fix the indices $j,k,l,m$ and denote $\aa(v) = \theta^j \theta^k \theta^l \theta^m $.
Let us also define $\kappa(v,\xi) = \xi_0 + \xi' v$. 
Then
\begin{align*}
(\widehat K^{jklm}, \phi)_{\SS'\times \SS(\R^4)}
&= \int_{\S} \int_\R \int_{\R^4} \aa e^{-i\rho \xi \cdot \theta} \phi(\xi) d\xi d\rho dv
\\&= 2 \pi \int_{\S} (\kappa^* \delta, \phi)_{\SS'\times\SS(\R^4)}\, \aa dv.
\end{align*}
Here the pullback is 
by $\kappa_v(\xi) = \kappa(v,\xi)$ with fixed $v$.
Note that the gradient $\nabla \kappa_v = \theta$ does not vanish,
and therefore the pullback is well-defined.

Let us assume for a moment that $\supp(\phi)$
does not intersect the set of lightlike vectors 
$$
\{\xi = (\xi_0, \xi') \in \R \times \R^3;\ |\xi_0| = |\xi'|\}.
$$
Then
$$
\int_{\S} (\kappa^* \delta, \phi)_{\SS'\times\SS(\R^4)}\, \aa dv
= ((\kappa^* \delta, \aa)_{\D' \times \D(\S)}, \phi)_{\SS'\times\SS(\R^4)},
$$
Here the second pullback is by $\kappa_\xi(v) = \kappa(v,\xi)$ with fixed $\xi$.
Let us now show that this is well-defined.
We 
have $\nabla_{\R^3} \kappa_\xi = \xi'$. 
Thus on $S_\xi^1 = \kappa_\xi^{-1}(0)$
$$
|\nabla_{\S} \kappa_\xi(v)|^2 
= |\xi' - (\xi' v) v|^2 = 
|\xi'|^2 - |\xi'v|^2 = 
|\xi'|^2 - |\xi_0|^2.
$$
We see that the pullback by $\kappa_\xi$ is well-defined if and only if $\xi$ is not lightlike.

If $\xi$ is timelike, then 
$|\xi'| < |\xi_0| = |\xi'v| \le |\xi'|$ on $S_\xi^1$ 
which is a contradiction. Thus $S_\xi^1$ does not intersect the timelike vectors
and $\widehat  K^{jklm} = 0$ on timelike vectors.

Suppose now that $\xi$ is spacelike. Then
\begin{align}
\label{kappaxi_pullback}
(\kappa_\xi^* \delta, \aa)_{\D' \times \D(\S)} = (|\xi'|^2 - |\xi_0|^2)^{-1/2} 
\int_{S_\xi^1} \aa(v) dv.
\end{align}
The set $S_\xi^1$ is the intersection of $\S$ with the affine plane 
\begin{align*}
P_\xi = \{v \in \R^3;\ \xi'v = - \xi_0\}.
\end{align*}
As $\xi$ is spacelike, $0 \le |\xi_0| < |\xi'|$,
and we see that $w = -\xi_0\xi'/|\xi'|^2 \in P_\xi$.
Thus $S_\xi^1$ is a circle of radius $r$ satisfying $r^2 + |w|^2 = 1$,
that is,
$$
r = |\xi'|^{-1} \sqrt{|\xi'|^2 - |\xi_0|^2}.
$$

To finish the proof, it is enough to show that $\widehat K^{jklm}$
coincides with a locally integrable function.
Let us first assume that $\supp(\phi)$ does not intersect the set
\begin{align}
\label{set_xiprime_zero}
\{\xi = (\xi_0, \xi') \in \R \times \R^3;\ |\xi'| = 0\}.
\end{align}
Then for each $\xi \in \supp(\phi)$
we can choose spherical coordinates so that
$$
v = v(\alpha, \beta) = (\sin \alpha \cos \beta, \sin \alpha \sin \beta, \cos \alpha)
$$
and $v(0, \beta) = -\xi'/|\xi'|$. In these coordinates 
\begin{align*}
\int_{\S} (\kappa^* \delta, \phi)_{\SS'\times\SS(\R^4)}\, \aa dv
&=
\int_{S^2} \int_{\R^3} \phi(-\xi'v, \xi') \aa(v) dv d\xi
\\&=
\int_{\R^3} \int_0^{2\pi} \int_0^\pi \phi(|\xi'| \cos \alpha, \xi') \aa(\alpha, \beta)
\sin \alpha d\alpha d\beta d\xi.
\end{align*}
The change of coordinates $\xi_0 = |\xi'| \cos \alpha$
gives 
\begin{align*}
\int_{\S} (\kappa^* \delta, \phi)_{\SS'\times\SS(\R^4)}\, \aa dv
&=
\int_{\R^3} |\xi'|^{-1} \int_0^{2\pi} \int_{-|\xi'|}^{|\xi'|} \phi(\xi_0, \xi') \aa(\xi_0, \beta) d\xi_0 d\beta d\xi.
\end{align*}
Hence away from the set (\ref{set_xiprime_zero}), the Fourier transform $\widehat K^{jklm}$
is the function
\begin{align}
\label{symbol_via_sphericalcoord}
|\xi'|^{-1} 1_{C}(\xi) \int_0^{2\pi} \aa(\xi_0, \beta) d\beta,
\end{align}
where $1_C$ is the indicator function of the set of spacelike vectors, that is, $1_C(\xi) = 1$ if $\xi$ is spacelike and it is zero otherwise.
Note that $\aa$ is bounded and that the function (\ref{symbol_via_sphericalcoord})
is locally integrable.

We have seen that $\widehat K^{jklm}$ vanishes on timelike vectors
and that it is a function away from the set (\ref{set_xiprime_zero}).
The origin is the only vector in the set (\ref{set_xiprime_zero}) 
that is not timelike. 
As $K^{jklm}$ is homogeneous of degree $-3$, its Fourier transform is homogeneous of degree $-1$.
The Fourier transform $\widehat K^{jklm}$ is a function since 
the only distributions supported in the origin are the linear combinations of derivatives of the delta distribution, and these derivatives are homogeneous of degree $-4$ or less.
\end{proof}

Note that the integral 
$
\int_{S^1_\xi} \theta^j \theta^k \theta^l \theta^m dv
$
is homogeneous of degree zero with respect to $\xi$.
It can be evaluated explicitly for all the $4^4$ combinations of the indices.
As an example, let us give $\widehat K^{jjjj}(\xi)$, $j=0,1,2,3$,
for a spacelike $\xi \in \R^4$. After a rotation in $\xi'$, we may assume that 
$\xi = (\xi_0, 0, 0, -|\xi'|)$.
Then at $\xi$ it holds that 
$$
\widehat K^{0000}(\xi) = \frac{4 \pi^2}{|\xi'|},
\quad \widehat K^{1111}(\xi)
= \frac{3\pi^2  (|\xi'|^2 - |\xi_0|^2)^2}{2 |\xi'|^5},
\quad \widehat K^{3333}(\xi) = \frac{4 \pi^2 \xi_0^4}{|\xi'|^5},
$$
and $\widehat K^{2222}(\xi) = \widehat K^{1111}(\xi)$.

An analogous light ray transform can be defined for scalar functions as follows,
$$
L_{sc} f(y, v) = \int_\R f(s,y+sv) ds, \quad y \in \R^3,\ v \in S^2,
$$
and a variation of the above argument shows that $L_{sc}^* L_{sc}$
is given by the Fourier multiplier $\widehat K^{0000}$.
The scalar case will be considered in depth in \cite{SU_book_in_prog}.

\section{The null space of the symbol of the normal operator}

In this section we will characterize the null space of the 
tensor $\widehat K^{jklm}(\xi)$
for spacelike $\xi \in \R^{4}$.
Note that if $\omega \in C_0^\infty(\R^4; \L1)$, then the Fourier transform of its symmetric differential is 
$$
(\widehat{d^s \omega })_{ij}(\xi) =
\frac {\xi_i \widehat \omega _j + \xi_j \widehat \omega_i} 2.
$$
To cover also the limited angle case, to be discussed in the next two sections,
we will prove the following slightly more general result. We refer to (\ref{S1}) for the definition of $S^1_\xi$. 

\def\ker{\mathop{\rm Ker}}
\begin{lemma}
\label{lem_nullspace}
Let $\xi \in \R^{4}$ be spacelike, let $\chi \in C^\infty(S_\xi^1)$ 
and suppose that $\chi \ge 0$ and that $\chi$ does not vanish identically.
Then the null space of the  linear map $N:f_{lm} \mapsto N^{jklm} f_{lm}$ on 
$\Sym$, where 
\begin{align}
\label{def_N}
N^{jklm} = \int_{S^1_\xi} \chi \theta^j \theta^k \theta^l \theta^m dv,
\quad \theta = (1,v),\ j,k=0,1,2,3,
\end{align}
 is
$
\ker(N) = \{c g_{lm} + \xi_l \omega_m + \xi_m \omega_l;\ c \in \R,\ \omega \in \R^4 \}.
$
Here $g$ is the Minkowski metric.
\end{lemma}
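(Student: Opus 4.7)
The plan is to prove the two inclusions separately. The inclusion $\supseteq$ is immediate from two pointwise identities on $S^1_\xi$: each vector $\theta = (1,v)$ with $v \in S^1_\xi$ is $g$-null, so $g_{lm}\theta^l\theta^m = 0$, and it annihilates $\xi$, so $\xi_l\theta^l = 0$ by definition of $S^1_\xi$. Direct substitution of $f_{lm} = c\,g_{lm}$ or $f_{lm} = \xi_l\omega_m + \xi_m\omega_l$ into (\ref{def_N}) makes the integrand vanish identically, so both lie in $\ker(N)$.

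For the reverse inclusion I would exploit that $N$ defines a non-negative symmetric bilinear form
\begin{equation*}
B(f,h) := N^{jklm}f_{lm}h_{jk} = \int_{S^1_\xi}\chi\,(f_{lm}\theta^l\theta^m)(h_{jk}\theta^j\theta^k)\,dv
\end{equation*}
on $\Sym$, with $B(f,f) = \int_{S^1_\xi}\chi\,p(v)^2\,dv$ where $p(v) := f_{lm}\theta^l\theta^m$. If $Nf=0$ then $B(f,f)=0$; since $\chi \ge 0$ is smooth and not identically zero, $\chi > 0$ on a nonempty open arc $U \subset S^1_\xi$, so $p \equiv 0$ on $U$. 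But $p$ is a polynomial in $v$ of degree at most two, hence real-analytic on the circle $S^1_\xi$, so $p \equiv 0$ on all of $S^1_\xi$. By homogeneity, $f(\eta,\eta) = 0$ for every $\eta$ in the cone $\{a(1,v) : a \in \R,\ v \in S^1_\xi\}$.

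I would then identify that cone with the full null cone of the restriction $g|_H$ to the hyperplane $H := \ker\xi \subset \R^4$: any nonzero $g$-null $\eta \in H$ has $\eta^0 \ne 0$ (otherwise $|\eta'|^2 = 0$ too), hence $\eta = \eta^0(1,\eta'/\eta^0)$ with $\eta'/\eta^0 \in S^1_\xi$. Since $\xi$ is spacelike, $g|_H$ is Lorentzian of signature $(-,+,+)$, so $f|_H$ is a symmetric 2-tensor on a 3-dimensional Lorentzian space vanishing on its null cone; the standard fact that such a tensor must be a scalar multiple of the metric gives $f|_H = c\,g|_H$ for some $c \in \R$. Then $(f - cg)|_H = 0$ as a bilinear form by polarization, and any symmetric 2-tensor on $\R^4$ annihilating the hyperplane $\ker\xi$ has the form $\xi_l\omega_m + \xi_m\omega_l$: fix $n \notin H$ with $\xi(n)=1$ and set $\omega(v_H) := (f-cg)(n,v_H)$ for $v_H \in H$ and $\omega(n) := (f-cg)(n,n)/2$; a direct check recovers $f - cg$ from $\xi \otimes_s \omega$.

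The main obstacle is the 3D Lorentzian linear-algebra fact above. I would prove it by diagonalizing $g|_H$, parametrizing the null circle at fixed time by $(\cos\alpha,\sin\alpha)$, expanding $(f|_H)(\theta,\theta)$ as a trigonometric polynomial in $\alpha$, and observing that vanishing of its Fourier coefficients kills every component of $f|_H$ except the coefficient of $g|_H$ itself. Everything else is bookkeeping. Alternatively, the whole argument can be performed directly in coordinates adapted to $\xi = (\xi_0,0,0,\xi_3)$, expanding $p(v)$ as a trigonometric polynomial on $S^1_\xi$, extracting five independent linear constraints on the ten components of $f$, and matching them against the explicit form of $c\,g_{lm} + \xi_l\omega_m + \xi_m\omega_l$ via dimension counting.
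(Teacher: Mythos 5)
Your proof is correct. The $\supseteq$ inclusion matches the paper's. For $\subseteq$, you arrive at the same intermediate conclusion as the paper — that $f(\theta,\theta)=0$ for $\theta=(1,v)$, $v\in S^1_\xi$ — but then take a more invariant route: you extend the vanishing from an open arc to all of $S^1_\xi$ by analyticity of the trigonometric polynomial $p$, identify the cone over $S^1_\xi$ with the null cone of $g|_H$ on the Lorentzian hyperplane $H=\ker\xi$ (signature $(-,+,+)$ because $\xi$ is spacelike), invoke the fact that a symmetric bilinear form on a Lorentzian 3-space vanishing on the null cone is a multiple of the metric, and finally give an explicit $\omega$ with $f-cg=\xi\otimes\omega+\omega\otimes\xi$ by picking $n\notin H$ with $\xi(n)=1$. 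The paper instead works with the open arc directly (no analyticity extension needed), performs an explicit Lorentz boost sending $e_3$ to $\xi/\sqrt{g(\xi,\xi)}$ to reduce to a $3\times3$ tensor, and establishes the "3D Lorentzian fact" by hand via three successive $\beta$-derivatives (the identities (\ref{d0})--(\ref{d3})), which is the same content as your proposed Fourier-coefficient argument. Your framing is cleaner conceptually and isolates the transferable lemma; the paper's is entirely self-contained and avoids quoting an unproved "standard fact," at the cost of explicit boost coordinates. Note that your 3D lemma does still need the short trigonometric-polynomial computation you sketch — it is not off-the-shelf — so in fully written form the two proofs would contain essentially the same calculation, just placed differently. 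Your explicit construction of $\omega$ via $n$ is a genuine improvement in precision over the paper's brief "we can choose $\tilde\omega\in\R^4$ such that \ldots."
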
 
\begin{proof}
We have $g_{lm} \theta^l \theta^m = 0$ since $\theta$ is lightlike,
and $\theta^m \xi_m = 0$ since $v \in S_\xi^1$.
Thus the tensors $cg + \xi \otimes \omega + \omega \otimes \xi$ are in $\ker(N)$. 
Let us now show that there are no other tensors in $\ker(N)$.
Suppose that $f \in \ker(N)$. Then
$$
0 = \int_{S^1_\xi} \chi  \bar f_{jk}\theta^j \theta^k \theta^l \theta^m f_{lm}dv
= \int_{S^1_\xi} \chi |f_{jk} \theta^j \theta^k|^2 dv.
$$
Then 
$f_{jk} \theta^j \theta^k = 0$ for $v$ in a nonempty open subset of 
$S^1_\xi$.

We write $\xi = (\xi_0, \xi') \in \R^{1+3}$.
After a rotation in $\xi'$, we may assume that $\xi' = (0,0,1)$.
As $\xi$ is spacelike, there is a 
Lorentz boost $B$ such that $\xi = B e_3$,
where $e_3 = (0,0,0,1)$.
The boost $B$ is a linear map $B^k_j$ such that if $\tilde \theta^k = B^k_j \theta^j$
then $\tilde \theta^j = \theta^j$, $j=1,2$, and
\begin{align}
\label{boost}
\tilde \theta^0 = \theta^0 \cosh \alpha + \theta^3 \sinh \alpha,
\quad
\tilde \theta^3 = \theta^0 \sinh \alpha + \theta^3 \cosh \alpha,
\end{align}
where $\alpha$ is the hyperbolic angle (or rapidity) of the boost.

When $v \in S_\xi^1$ and $\theta = (1,v)$ we have
$$
0 = \xi \theta = (B e_3) \theta = \tilde \theta^3,
$$
since the boost $B$ is symmetric.
Combining this with (\ref{boost}) and $\theta^0 = 1$
gives $\tilde \theta^0 = 1/\cosh \alpha$.
Indeed,
$$
\theta^3 = - \frac {\sinh \alpha} {\cosh \alpha} \quad  \text{and} \quad
\tilde \theta^0 = \cosh \alpha - \frac {\sinh^2 \alpha} {\cosh \alpha} = \frac {1} {\cosh \alpha}.
$$

Let us denote $A = B^{-1}$ and define 
a $3 \times 3$ symmetric tensor $h$ by 
$$
h_{lm} = \cosh^{-2} \alpha\,\tilde f_{lm},
\quad l,m=0,1,2,
$$
where $\tilde f_{lm} = A^j_l f_{jk} A^k_m$.
Let us also write $W = \cosh \alpha (\tilde \theta^0, \tilde \theta^1, \tilde \theta^2) \in \R^3$
and $w = (W^1, W^2) \in \R^2$.
Then
$$
0 = f_{jk} \theta^j \theta^k = \tilde f_{lm} \tilde \theta^l  \tilde \theta^m = h_{lm} W^l  W^m,
$$
since $\tilde \theta^3 = 0$.
Moreover, $W^0 = 1$ and $w = \cosh \alpha (\theta^1, \theta^2)$ satisfies
$$
|w|^2 = \cosh^2 \alpha (1 - |\theta^3|) = \cosh^2 - \sinh^2 \alpha = 1,
$$
since $\theta = (1,v)$ and $v \in S^2$.
Thus 
$$
h((1,w), (1,w)) = h_{lm} W^l  W^m = 0
$$ for $w$ in a nonempty open subset of $S^1$.
After a rotation we may assume that $w = (\sin \beta, \cos \beta)$ is in this set for small $\beta$.
Differentiating three times with respect to $\beta$ we get 
\begin{align}
\label{d0}
0 &= h((1,w), (1,w)),
\\
\label{d1}
0 &= h((1,w), (0,w')),
\\
\label{d2}
0 &= h((0,w'), (0,w')) - h((1,w), (0,w)),
\\
\label{d3}
0 &= -3h((0,w'), (0,w)).
\end{align}
Now (\ref{d3}) implies that $h_{12} = 0$. By (\ref{d3}) we have
$$
h((1,w), (0,w')) = h((1,0),(0,w')) + h((0,w),(0,w'))
= h((1,0),(0,w')),
$$
which implies together with (\ref{d1}) that $h_{02} = 0$.
Differentiating this once more we see that $h_{01} = 0$.
As all the cross terms vanish, (\ref{d0}) implies that 
$h_{11} = -h_{00}$. Differentiating (\ref{d3}) we see that
$$
0 = h((0,w),(0,w)) - h((0,w'),(0,w')),
$$
which implies that $h_{11} = h_{22}$. Thus $h$ is proportional to the $1+2$ dimensional Minkowski metric.

We have shown that there is $c \in \R$ such that 
$\tilde f_{ml} - c g_{ml}$
vanishes when $m,l = 0,1,2$. Moreover, 
we can choose $\tilde \omega \in \R^4$ such that 
$$
\tilde f - c g - e_3 \otimes \tilde \omega  - \tilde \omega  \otimes e_3 = 0.
$$
We recall that $\xi = B e_3$.
As the $1+3$ dimensional Minkowski metric $g$ is invariant under the Lorentz boost $B$, we have
$$
f = cg + \xi \otimes \omega + \omega \otimes \xi,
$$
where $\omega = B \tilde \omega$.
\end{proof}

Let us now construct a projection onto a complement of $\ker(N)$ as a symbol $P$ of order zero that commutes with $N$. 
The construction is inspired by \cite{Sharafutdinov1994, SU-Duke}.
We will raise and lower indices by using the Minkowski metric $g$. 
We define 
$$
\lambda^{ij}_{kl} = \kappa^i_k \kappa^j_l,
\quad \kappa^i_k = \delta^i_k - \eta_k \eta^i,
\quad \eta = \frac \xi {\sqrt{g(\xi, \xi)}},
$$ 
where $\xi$ is assumed to be spacelike,
and $\delta^i_k$ is the Kronecker delta. Then $\eta^i \eta_i = g(\eta, \eta) = 1$ and 
\begin{align}
\label{kappa_eta}
\kappa^i_k \eta_i = \eta_k - \eta_k \eta^i \eta_i = 0.
\end{align}
Thus $\xi \otimes \omega + \omega \otimes \xi \in \ker(\lambda)$ for any $\omega \in \R^4$.

We have
$\kappa^i_k g_{ij} = g_{kj} - \eta_k \eta_j$,
whence $\lambda g = \mu$, where
$$
\mu_{kl} = g_{kl} - \eta_k \eta_l.
$$ 
Let us define $h^{ij} = g^{ij} - \eta^i \eta^j$, and 
\begin{align}
\label{the_projection}
P = \lambda - \frac{\mu \otimes h} 3.
\end{align}

\begin{lemma}
Let $N$ be a tensor of the form (\ref{def_N}).
Then $P$ is a projection satisfying $\ker(P) = \ker(N)$ and $PN = N = NP$.
\end{lemma}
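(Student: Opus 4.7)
The plan is to verify the three assertions through direct index manipulations based on a short list of identities implied by $\eta^i\eta_i = 1$. First I would record: $\kappa^i_k \eta_i = 0$ and $\kappa^i_k \eta^k = 0$ (so both slots of $\lambda$ annihilate $\eta$); $\kappa^p_i \kappa^i_k = \kappa^p_k$ (hence $\lambda^2 = \lambda$); $h^{ij}\eta_i = 0$ and $h^{il}\kappa^j_l = h^{ij}$ (which together give $h^{ij}(\lambda f)_{ij} = h^{ij}f_{ij}$, i.e.\ the $h$-trace is preserved by $\lambda$); and $h^{ij}\mu_{ij} = 3$. These are all one-line computations from $\kappa^i_k = \delta^i_k - \eta_k\eta^i$.

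For idempotence I would expand $P^2$ into its four pieces. The identities above give $\lambda\circ(\mu\otimes h) = \mu\otimes h$ (since $\lambda\mu = \mu$ was already noted), $(\mu\otimes h)\circ\lambda = \mu\otimes h$ (since $h\cdot\lambda f = h\cdot f$), and $(\mu\otimes h)^2 = 3(\mu\otimes h)$ (since $h^{ij}\mu_{ij} = 3$), whence
\[
P^2 = \lambda - \tfrac{2}{3}(\mu\otimes h) + \tfrac{1}{3}(\mu\otimes h) = P.
\]
For $\ker(N)\subseteq\ker(P)$ I would check directly that $Pg = \mu - \tfrac{1}{3}\mu\cdot 3 = 0$ and that $P(\eta\otimes\omega + \omega\otimes\eta) = 0$, using that $\kappa$ kills $\eta$ in each slot of $\lambda$ and that $h^{ij}\eta_i = 0$ kills the trace term. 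For the reverse inclusion, I would assume $Pf = 0$ and set $c := \tfrac{1}{3}h\cdot f$; then $\lambda(f - cg) = 0$, and a short orthogonal decomposition along $\eta$ identifies $\ker(\lambda)$ with the four-dimensional subspace $\{\eta\otimes\omega + \omega\otimes\eta : \omega\in\R^4\}$, placing $f$ in $\ker(N)$ (since $\eta$ is a positive scalar multiple of $\xi$).

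Finally, for $PN = N = NP$ the decisive observation is that every $v \in S^1_\xi$ satisfies $\eta_i\theta^i = 0$ by the defining equation $\xi_0 + \xi'\cdot v = 0$, so inside the integrand of (\ref{def_N}) we have $\kappa^i_k\theta^k = \theta^i$, and $\mu_{ij}\theta^i\theta^j = g_{ij}\theta^i\theta^j = 0$ by lightlikeness of $\theta$. Substituting these into $P(Nf)^{jk}$ and $N(Pf)^{jk}$, the $\kappa$-factors pass through the integral as the identity while the $\mu\otimes h$ correction vanishes term by term, yielding both commutation relations. The only non-routine ingredient is the identification of $\ker(\lambda)$, which I would handle by writing any symmetric $f$ as $a\,\eta\otimes\eta + \eta\otimes\omega^\perp + \omega^\perp\otimes\eta + f^{\perp\perp}$ (with $\omega^\perp$ and both slots of $f^{\perp\perp}$ orthogonal to $\eta$) and observing that $\lambda f = f^{\perp\perp}$.
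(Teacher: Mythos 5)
Your proof is correct and rests on the same one-line identities from $\eta^i\eta_i = 1$ that the paper uses: $\kappa$ annihilates $\eta$ in both slots, $h^{ij}\eta_i = 0$, $h^{ij}g_{ij} = 3$, $\kappa^p_i\kappa^i_k = \kappa^p_k$, and, for the commutation with $N$, the crucial observation that $\eta_i\theta^i = 0$ and $g_{ij}\theta^i\theta^j = 0$ on the integration circle $S^1_\xi$. Your computation of $P^2 = P$ and your verification of $PN = N = NP$ match the paper's. Where you genuinely diverge is in establishing $\ker(P)\subseteq\ker(N)$: the paper extracts this for free from the already-proved identity $NP = N$ (if $Pf = 0$ then $Nf = NPf = 0$), whereas you prove it constructively by setting $c = \tfrac13 h\cdot f$, observing $\lambda(f - cg) = 0$, and then using an orthogonal decomposition along $\eta$ to identify $\ker(\lambda)$ with $\{\eta\otimes\omega + \omega\otimes\eta : \omega\in\R^4\}$. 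Both routes are valid; the paper's is shorter once $NP = N$ is in hand, while yours is more self-contained and makes the structure of $\ker(\lambda)$ explicit, which has some independent interest since it exhibits directly what the projection $\lambda$ is doing. One small exposition gap in your sketch: you invoke ``$\lambda\mu = \mu$ was already noted,'' but you never explicitly recorded $\lambda g = \mu$ (the paper does); you should either state $\lambda g = \mu$ among your preliminary identities or derive $\lambda\mu = \mu$ from $\kappa^i_k g_{ij} = \mu_{kj}$ together with $\kappa\eta = 0$.
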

\begin{proof}
We have
$$
h^{ij} g_{ij} = g^{ij} g_{ij} - g(\eta, \eta) = 4 - 1 = 3,
$$
whence $g \in \ker(P)$. Moreover, 
\begin{align}
\label{h_eta}
h^{ij} \eta_i = \eta^j - \eta^j \eta^i \eta_i =0, 
\end{align}
whence 
$\xi \otimes \omega + \omega \otimes \xi \in \ker(P)$ for any $\omega \in \R^4$.
We have shown that $\ker(N) \subset \ker(P)$.

Let us show that $P^2 = P$. We have 
$\kappa_k^p \kappa_p^i = \kappa_k^i$
by (\ref{kappa_eta}), whence $\lambda^2 = \lambda$.
Similarly (\ref{h_eta}) implies that 
$$
h^{ij} \mu_{ij} = \delta^{ij}_{ij} - g(\eta, \eta) = 3,
$$
and $(\mu \otimes h)^2 = 3 \mu \otimes h$.
Moreover,
$$
\lambda (\mu \otimes h) = (\lambda \mu) \otimes h = (\lambda g) \otimes h = \mu \otimes h.
$$
Analogously with above, we have 
$\mu \otimes h \lambda = \mu \otimes h$. Thus
$$
P^2 = \lambda^2 - \lambda \frac{\mu \otimes h} 3
- \frac{\mu \otimes h} 3 \lambda + \frac{(\mu \otimes h)^2} 9
= \lambda - 2\frac{\mu \otimes h} 3
+ \frac{\mu \otimes h} 3 = P.
$$

We have $P_{kl}^{pq} N_{pq}^{ij} = N_{kl}^{ij}$ since 
$$
\kappa_k^p \theta_p = \theta_k - \eta_k \frac {\xi^p \theta_p} {\sqrt{g(\xi, \xi)}} = \theta_k,
\quad
h^{pq} \theta_p \theta_q = 
g(\theta, \theta) - \frac{\xi^p \theta_p \xi^q \theta_q}{g(\xi, \xi)} = 0.
$$
Analogously $NP = N$, therefore $\ker(P) \subset \ker(N)$.
\end{proof}

\begin{lemma}
\label{lem_microlocal_nullsp}
Let $M \subset \R^4$ be open and let $(x,\xi) \in T^* M$ be spacelike.
Then for all $f \in \E'(M; \Sym)$ we have that
$(x,\xi) \in \WF(P f)$
if and only if
$(x,\xi) \in \WF(f + c g + d^s \omega)$
for all $c \in \E'(M)$ and $\omega \in \E'(M; \L1)$.
Furthermore, if $A$ is a pseudodifferential operator 
satisfying $A(c g + d^s \omega) = 0$
for all $c \in \E'(M)$ and $\omega \in \E'(M; \L1)$,
then 
$
(x,\xi) \in \WF(A P f)$
if and only if
$(x,\xi) \in \WF(Af)$.
\end{lemma}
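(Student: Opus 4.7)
The plan is to exploit the fact that, after the Minkowski reduction of Section 6, the tensor $P(\xi)$ constructed above is a Fourier multiplier whose symbol \emph{exactly} annihilates the gauge subspace at every spacelike $\xi$. I fix a homogeneous cutoff $\psi \in C^\infty(\R^4 \setminus 0)$ supported in the cone of spacelike covectors and equal to $1$ in a small conic neighborhood of $\xi$, and realize $P$ as the pseudodifferential operator with symbol $\psi(\xi) P(\xi)$. Because $P(\xi) g = 0$ and $P(\xi)(\xi \otimes \omega + \omega \otimes \xi) = 0$ for every spacelike $\xi$ and every $\omega \in \R^4$, the compositions $P \circ (c \mapsto cg)$ and $P \circ d^s$ have identically vanishing Fourier-multiplier symbols in Minkowski coordinates (where the Christoffel symbols in $d^s$ vanish). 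Hence $Pcg \in C^\infty$ and $Pd^s \omega \in C^\infty$ for all $c \in \E'(M)$ and $\omega \in \E'(M; \L1)$. This is the key algebraic input.

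The forward implication of the first equivalence is then immediate: if $(x,\xi) \notin \WF(f + cg + d^s\omega)$ for some gauge pair, pseudolocality of $P$ together with $P(cg + d^s \omega) \in C^\infty$ gives $(x,\xi) \notin \WF(Pf)$.

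For the converse, starting from $(x,\xi) \notin \WF(Pf)$, I construct a pair $(c,\omega)$ realizing $(x,\xi) \notin \WF(f + cg + d^s\omega)$ by inverting the gauge parametrization. The preceding Lemma~\ref{lem_nullspace} shows that, for each spacelike $\xi$, the map $(c,\omega) \mapsto cg + \xi \otimes \omega + \omega \otimes \xi$ from $\R \times \R^4$ onto $\ker P(\xi)$ is a linear isomorphism whose inverse is smooth and positively homogeneous in $\xi$, of degree $0$ on the scalar component and degree $-1$ on the covector component. Applying this inverse to the symbol of $(I - P)\psi(D)f$, and absorbing the factor $-2i$ needed to convert $(i/2)(\xi \otimes \hat\omega + \hat\omega \otimes \xi)$ into $\widehat{d^s\omega}$, produces distributions $c$ and $\omega$ with
\[
(I - P)\psi(D) f \equiv cg + d^s \omega \quad \text{modulo } C^\infty.
\]
Multiplying by a spatial cutoff that equals $1$ near $x$ places $c$ and $\omega$ in $\E'$ without disturbing the wave front set at $(x,\xi)$. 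With these choices,
\[
f - cg - d^s \omega \equiv \psi(D) Pf + (1 - \psi(D)) f \quad \text{modulo } C^\infty,
\]
and each summand is smooth at $(x,\xi)$ — the first because $(x,\xi) \notin \WF(Pf)$ and $\psi(D)$ is pseudolocal, the second because $\psi \equiv 1$ on a conic neighborhood of $\xi$. Replacing $(c,\omega)$ by $(-c,-\omega)$ yields the statement.

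The second equivalence follows by recycling the same decomposition: for the $(c,\omega)$ just produced, $f \equiv Pf + cg + d^s\omega$ modulo $C^\infty$ at $(x,\xi)$, hence
\[
Af \equiv APf + A(cg + d^s \omega) = APf \quad \text{modulo } C^\infty
\]
at $(x,\xi)$, the last equality using the hypothesis that $A$ kills every gauge distribution exactly. The main obstacle I anticipate is the extraction step — verifying that the inverse of the gauge parametrization is a genuine classical symbol on the spacelike cone, and tracking carefully how the cutoffs $\psi(D)$ and the spatial cutoff interact with the wave front at $(x,\xi)$. The remainder is routine pseudolocality bookkeeping.
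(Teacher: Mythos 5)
Your proof is correct and follows essentially the same route as the paper: microlocalize $f$ to the spacelike cone, decompose it as $P\psi(D)f + (1-P)\psi(D)f$, identify $(1-P)\psi(D)f$ as a gauge field $cg + d^s\omega$ by inverting the fiberwise parametrization of $\ker P(\xi) = \ker N(\xi)$ supplied by Lemma~\ref{lem_nullspace}, and close both directions by pseudolocality together with the exact annihilation $P(cg + d^s\omega) = 0$ for Fourier multipliers in Minkowski coordinates. The only difference is cosmetic: you make explicit (and rightly flag as the delicate point) that the inverse of the gauge parametrization is a smooth homogeneous symbol on the spacelike cone, a step the paper's proof performs implicitly when it passes from $h(\eta) \in \ker N$ to the existence of $\tilde c(\eta)$ and $\tilde\omega(\eta)$ and takes the inverse Fourier transform.
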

\begin{proof}
Let $\chi \in S^0(T^* M)$
satisfy $\chi(x,\xi) = 1$ and suppose that $\supp(\chi)$
contains only spacelike covectors.
Note that $(x,\xi) \in \WF(Pf)$
if and only if $(x,\xi) \in \WF(P \chi f)$,
since $P (1-\chi)$ is smoothing near $(x,\xi)$.

Let $h$ be the Fourier transform of $(1-P)\chi f$.
Then $h(\eta) \in \ker(P) = \ker(N)$,
and therefore $h(\eta)$ is of the form
$$
\tilde c(\eta)g + \eta \otimes \tilde \omega(\eta) + \tilde \omega(\eta) \otimes \eta, \quad \eta \in \R^4.
$$
By taking the inverse Fourier transform,
we see that $(1-P)\chi f$ is of the form $c g + d^s \omega$
for some $c \in \D'(\R^4)$ and $\omega \in \D'(\R^4; \L1)$.
Let $\chi_2 \in C_0^\infty(M)$
satisfy $\chi_2(x) = 1$.
Now $(x,\xi) \notin \WF(P \chi f)$ implies $(x,\xi) \notin \WF(f - \chi_2(c g + d^s \omega))$, since
\begin{align}
\label{f_is_Pf_plus_h}
\chi f = P \chi f + (1-P)\chi f = P \chi f + c g + d^s \omega.
\end{align}

Suppose now that there are $c \in \E'(M)$ and $\omega \in \E'(M; \L1)$
such that $(x,\xi) \notin \WF(f + c g + d^s \omega)$.
Then $(x,\xi) \notin \WF(P f)$ since
$$
\chi P(f + c g + d^s \omega) = \chi P f,
$$
and the pseudodifferential operator $P$ does not move singularities.
Furthermore, the formula (\ref{f_is_Pf_plus_h})
implies that $(x,\xi) \in \WF(A P \chi f)$
if and only if $(x,\xi) \in \WF(A \chi f)$.
\end{proof}

\section{Microlocal analysis of the light ray transform}
\label{sec_muloc_L}

\newcommand{\pair}[1]{\left\langle #1 \right\rangle}
\def\CC{\mathcal C}
\def\MM{\R^4}

Let us now study the light ray transform in the limited angle case,
that is, when $\UU_1 \subset \R^3$ is an arbitrary open set in Theorem \ref{th_main}. 
Let $\chi \in C_0^\infty(\UU_1)$.
The restricted light ray transform $\chi L$
belongs to the class of operators considered in \cite{Greenleaf1991}.
Here we will complement the results in \cite{Greenleaf1991}
by computing the symbols of the restricted light ray transform 
and its normal operator with a suitable cutoff. 

It is well known that a restricted ray transform is a Fourier integral operator associated with the canonical relation given by the twisted conormal bundle $N^*Z'$ of its point-line relation $Z$ \cite{Guillemin}.
We will next give a parametrization of $N^*Z'$ and the symbol of $\chi L$
in this parametrization. 

We take the Fourier transform with respect to $y \in \R^3$ and get
\begin{align*}
\F L f (\eta,v) = 
&= \int_{\R^3} \int_\R f_{lm}(s, y+sv) \theta^l \theta^m e^{-i\eta y} ds dy	    
\\\nonumber&= \int_{\R^{4}} f_{lm}(x) \theta^l \theta^m e^{-i \eta (x'-x^0 v)} dx.
\end{align*}
Here $\theta = (1,v)$ and we have used the change of variables (\ref{change_of_var_sy}). 
We see that the Schwartz kernel of $\chi L$ 
is the oscillatory integral 
\begin{align}
\label{def_KA}
\int_{\R^3} e^{i \phi(x,y,v,\eta)} 
\aa(x,y,v,\eta) d\eta,
\quad x \in \R^4,\ (y,v) \in \R^3 \times S^2,
\end{align}
where $\aa(x,y,v,\eta) = (2\pi)^{-3} \theta^l \theta^m$ and
$\phi(x,y,v,\eta) = \eta(y-x') + x^0 \eta v$.
We write $\CC = \R^3 \times S^2$.
The critical set of $\phi$ is
\begin{align*}
C_\phi &= \{(y,v; x; \eta) \in \CC \times \MM \times (\R^3 \setminus 0);\ 
\phi_\eta' = 0\}
\\&= \{(y,v,x,\eta) ;\ 
y = x' - x^0 v\}.
\end{align*}
The phase function $\phi$ parametrizes the conormal bundle $N^* Z$
of the 
point-line relation
$$
Z = \{ (x' - x^0 v, v; x^0, x') \in \CC \times \MM;\
v \in S^2,\ x \in \R^4 \}
$$
via the diffeomorphism
$C_\phi \ni (y,v,x,\eta) \mapsto (y,v,\phi_{y,v}', x, \phi_x') \in N^* Z$.
Note that the twisted conormal bundle 
\begin{align}\nonumber
N^* Z' = 
\{&(y, v, \eta, w;\, x^0, x', \xi_0, \xi')
\in (T^*\CC \times T^*\MM) \setminus 0; 
\\&\qquad \nonumber
y = x' - x^0v,\ 
\eta = \xi',\ w = x^0 \xi'|_{T_v S^2},\ \xi_0 = -\xi'v,\ 
\\&\qquad \label{Z}
v \in S^2,\ \eta \in \R^3,\ x \in \R^4 \}
\end{align}
is a canonical relation.
Indeed, 
$
N^* Z' \subset (T^* \CC \setminus 0) \times (T^* \MM \setminus 0)$,
since $\eta = 0$ if and only if $\xi' = 0$,
and $\xi' = 0$ implies $\xi_0 = -\xi'v = 0$. 

In the parametrization (\ref{def_KA}), the symbol of $\chi L$  is
$$
(2\pi)^{3/4} \chi(y) \theta^l \theta^m, \quad \theta = (1,v),\ 
(y,v) \in \CC.
$$
The symbol is of order zero 
(\ref{def_KA}) and hence $\chi L$ is a Fourier integral operator of order $-3/4$ associated with the canonical relation $N^* Z'$, see \cite[Def. 3.2.2]{FIO1}.
The theory of Fourier integral operators implies that 
$\chi L$ gives a continuous map from $\mathcal E'(\MM)$
to $\D'(\CC)$. 
Taking into account the conformal invariance, we have proven 
Lemma \ref{L_is_FIO}.

Let us next verify the claims in the end of Section \ref{sec_results} about the sharpness of Theorem \ref{th_main}. 

\begin{lemma}\label{lemma10.1}
\label{lem_lightlike_collinear}
Let $(y, v, \eta, w;\, x, \xi_0, \xi') \in N^* Z'$. Then
$\xi$ is lightlike if and only if $\xi'|_{T_v S^2} = 0$.
Moreover, if $\xi$ is lightlike then $\xi' = -\xi_0 v$.
\end{lemma}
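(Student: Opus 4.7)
The plan is to unpack the parametrization of $N^*Z'$ given in (\ref{Z}) and reduce everything to an elementary Cauchy--Schwarz observation, together with the identification of $T_v S^2$ with $v^\perp \subset \R^3$.

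First I would record the two relations from (\ref{Z}) that matter here: $v \in S^2$ (so $|v|=1$ in the Euclidean sense) and $\xi_0 = -\xi'\cdot v$, where the dot is the Euclidean inner product on $\R^3$. The Minkowski lightlike condition $-\xi_0^2 + |\xi'|^2 = 0$ then becomes $|\xi'\cdot v|^2 = |\xi'|^2 |v|^2$. By the equality case of the Cauchy--Schwarz inequality, this holds if and only if $\xi'$ is a scalar multiple of $v$, say $\xi' = \lambda v$ for some $\lambda \in \R$.

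Next I would interpret the condition $\xi'|_{T_v S^2} = 0$. Since $T_v S^2 = v^\perp \subset \R^3$, this condition says exactly that $\xi'$ vanishes on every vector Euclidean-orthogonal to $v$, which is equivalent to $\xi'$ being a scalar multiple of $v$. This gives the stated equivalence immediately.

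Finally, for the "moreover" part, assume $\xi$ is lightlike, so $\xi' = \lambda v$. Plugging into $\xi_0 = -\xi'\cdot v$ yields $\xi_0 = -\lambda |v|^2 = -\lambda$, hence $\xi' = \lambda v = -\xi_0 v$. No real obstacle is anticipated; the only mild subtlety is keeping track of the two different pairings (Minkowski for the lightlike condition on $\xi$ versus Euclidean for the spherical constraint), but both come out consistent because $v$ has Euclidean norm one and $\xi_0 = -\xi'\cdot v$ is stated with the Euclidean inner product.
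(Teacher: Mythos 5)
Your proof is correct and follows essentially the same route as the paper's: both use the relation $\xi_0 = -\xi'\cdot v$ from the parametrization of $N^*Z'$, apply the equality case of Cauchy--Schwarz to identify the lightlike condition with collinearity of $\xi'$ and $v$, identify $T_v S^2$ with $v^\perp$, and conclude the ``moreover'' part by direct substitution.
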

\begin{proof}
We have
$|\xi_0| = |\xi'v| \le |\xi'| |v| = |\xi'|$
with the equality holding if and only if $\xi'$ and $v$ are collinear.
But $\xi'$ and $v$ being collinear is equivalent with $\xi'|_{T_v S^2} = 0$.
Thus $\xi'|_{T_v S^2} = 0$ if and only if $|\xi_0| = |\xi'|$.
Suppose now that $\xi$ is lightlike. Then $\xi = c v$ for some $c \in \R$,
and thus $\xi_0 = -\xi'v = -c$.
\end{proof}

\begin{lemma}
\label{lem_WFA}
Let $(x,\xi) \in T^* \MM \setminus 0$.
Then the relation $C = N^*Z'$ 
mapping $T^* \MM$ to $T^* \CC$ satisfies the following:
\begin{enumerate}
\item If $\xi$ is timelike, then $C(x,\xi) = \emptyset$.
\item If $\xi$ is spacelike, then 
$
C(x,\xi) = \{ (x'-x^0 v , v, \xi', x^0 \xi'|_{T_v S^2});\ v \in S_\xi^1 \},
$ 
where $S_\xi^1$ is the circle defined in Lemma \ref{lem_Fourier}.
\item If $\xi$ is lightlike, then 
$
C(x,\xi) = (x' + \frac {x^0}{\xi_0} \xi', - \frac{\xi'}{\xi_0}, \xi', 0).
$
\end{enumerate}
\end{lemma}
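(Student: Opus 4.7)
The proof plan is a direct unpacking of the explicit parametrization of $N^*Z'$ in (\ref{Z}) combined with a trichotomy on the sign of $g(\xi,\xi)$.

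Fix $(x,\xi) \in T^*\MM \setminus 0$. By the description (\ref{Z}) of $N^*Z'$, a point $(y,v,\eta,w) \in T^*\CC$ lies in $C(x,\xi)$ if and only if there is $v \in S^2$ such that
\begin{align*}
y = x' - x^0 v, \quad \eta = \xi', \quad w = x^0 \xi'|_{T_v S^2}, \quad \xi_0 + \xi' v = 0.
\end{align*}
The first three equations determine $y$, $\eta$ and $w$ uniquely from $v$ and $(x,\xi)$, so the entire question reduces to which $v \in S^2$ satisfy the single scalar constraint $\xi_0 + \xi' v = 0$.

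The plan is then to run through the three cases by comparing $|\xi_0|$ with $|\xi'|$, using that $g(\xi,\xi) = -\xi_0^2 + |\xi'|^2$ in the Minkowski conformal frame (to which we reduce by Section~6). For a timelike $\xi$ one has $|\xi_0| > |\xi'|$, so by Cauchy--Schwarz $|\xi' v| \le |\xi'| < |\xi_0|$ for every $v \in S^2$; the constraint $\xi_0 + \xi' v = 0$ has no solution and $C(x,\xi) = \emptyset$. For a spacelike $\xi$ one has $|\xi_0| < |\xi'|$ and the solution set is exactly the circle $S^1_\xi$ from (\ref{S1}); substituting $v \in S^1_\xi$ into the three equations above gives the claimed parametrization.

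The lightlike case is where a small amount of care is required: here $|\xi_0| = |\xi'|$, so the constraint $\xi' v = -\xi_0$ with $v \in S^2$ forces equality in Cauchy--Schwarz, hence $v$ is collinear with $\xi'$. Since $\xi \ne 0$ and $\xi$ is lightlike one has $\xi_0 \ne 0$, and solving $v = \alpha \xi'$ with $|v|=1$ and $\xi' v = -\xi_0$ yields the unique solution $v = -\xi'/\xi_0$ (using $|\xi'|^2 = \xi_0^2$). For this $v$ we have $y = x' + (x^0/\xi_0)\xi'$ and $\eta = \xi'$, while $w = x^0 \xi'|_{T_v S^2} = 0$ by Lemma~\ref{lem_lightlike_collinear} (which says that $\xi$ lightlike is equivalent to $\xi'|_{T_v S^2} = 0$ for any $v$ appearing in $C(x,\xi)$). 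This gives the single-point image in case (3). I do not anticipate a genuine obstacle; the main thing to be careful about is tracking the dual pairing $\xi'|_{T_v S^2}$ when $\xi'$ and $v$ become parallel in the lightlike case, which is precisely what Lemma~\ref{lem_lightlike_collinear} already handles.
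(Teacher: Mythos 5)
Your proposal is correct and follows essentially the same route as the paper's proof: unpack the explicit description \eqref{Z} of $N^*Z'$, note that the only constraint on $v$ is $\xi_0 + \xi' v = 0$ (so $|\xi_0| \le |\xi'|$ by Cauchy--Schwarz, killing the timelike case), identify the solution set with $S^1_\xi$ in the spacelike case, and invoke Lemma~\ref{lem_lightlike_collinear} for the lightlike case to get the unique $v = -\xi'/\xi_0$ and $w = 0$. The paper's version is terser but the underlying reasoning is identical; the only cosmetic difference is that your aside about reducing to the Minkowski frame is unnecessary here, since Sections~8--11 already work in that frame.
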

\begin{proof}
If $(y,v;\eta,w) \in T^* \CC$ is in relation $N^* Z'$ with $(x,\xi) \in T^*\MM$, that is, $(y,v;\eta,w) \in N^* Z'(x,\xi)$,
then 
$$|\xi_0| = |\xi'v| \le |\xi|.$$ 
Thus $\xi$ is lightlike or spacelike.
Let us suppose that $\xi$ be spacelike.
We have $y = x'-x^0 v$, $\eta = \xi'$
and $w = x^0 \xi'|_{T_v S^2}$. 
Finally, $\xi_0 = -\xi'v$ is equivalent with $v \in S_\xi^1$.
Let us now suppose that $\xi$ in lightlike.
Note that $\xi_0 \ne 0$ since otherwise $\xi = 0$.
Lemma \ref{lem_lightlike_collinear} implies that $v = -\xi'/\xi_0$
and that $w = 0$.
\end{proof}

The restricted light ray transform $\chi L$ maps singularites through the relation $N^* Z'$, that is,
$\WF(\chi Lf) \subset N^* Z'\ \WF(f)$, see \cite[Th. 4.1.1, Th. 2.5.14]{FIO1}. Hence
Lemma \ref{lem_WFA} implies that $\WF(\chi Lf)$ is empty
if $\WF(f)$ contains only timelike covectors.
Let us now suppose that $(x,\xi) \in T^* \MM$ is spacelike and that 
the circle 
\begin{align}
\label{def_S1xxi}
S_{x,\xi}^1 = \{ x' - x^0 v;\ v \in S_\xi^1 \}
\end{align}
does not intersect $\UU_1$. 
Then Lemma \ref{lem_WFA} implies that $\WF(\chi L f)$ is empty
if $\WF(f)$ is contained in a small conical neighborhood of 
of $(x, \xi)$ in $T^*\MM$.

The condition (\ref{visibility_cond})
is equivalent with $S_{x,\xi}^1 \cap \UU_1 \ne \emptyset$.
Namely, 
if a null geodesic intersects $x$
then, up to an affine reparametrization,
it coincides with 
$\gamma(\tau; x,\theta)$ where $\theta = (1,v)$
for some $v \in S^2$.
Now $\xi(\dot \gamma(0; x,\theta)) = \xi_0 + \xi'v$, 
whence $v \in S_\xi^1$ is equivalent with vanishing of $\xi(\dot \gamma(0; x,\theta))$. Finally, $\gamma(\tau; x,\theta)$
intersects $\{0\} \times \UU_1$ for some $v \in S_\xi^1$ 
if and only if $S_{x,\xi}^1 \cap \UU_1 \ne \emptyset$.

Let us now suppose that $(x,\xi) \in T^* \MM \setminus 0$ is lightlike,
and consider the null geodesic 
\begin{align}
\label{lightlike_nonuniqueness_mu}
\mu(\tau) = (x^0 + \tau, x' - \tau \xi_0^{-1} \xi'), \quad \tau \in \R.
\end{align}
Then $N^* Z'$ maps all the points $(\mu(\tau), \xi) \in T^* \MM$, $\tau \in \R$,
on the same point in $T^* \CC$.
We expect that this leads to artifacts when trying to recover lighlike singularities, however, we do not analyze this further in the present paper.

\section{Microlocal analysis of the normal operator}
\label{sec_microlocal_normalop}

Let us consider an operator $A$ having 
kernel of the form (\ref{def_KA}),
and use the following notation for the projections
\begin{align}
\label{NZ_projections}
\xymatrix{
& N^*Z' \ar[ld]_\pi \ar[dr]^\rho &   
\\
T^*\MM & & T^*\CC }
\end{align}   
When the projection $\rho$ is an injective immersion, 
the diagram (\ref{NZ_projections}) is said to satisfy the Bolker condition.
In this case the corresponding normal operator $A^* A$ is a pseudodifferential operator \cite{Guillemin}.
In our case, the Bolker condition holds only outside the set 
\def\LL{\mathscr L}
$$
\LL = \{(y, v, \eta, w;\, x, \xi) \in T^*\CC \times T^*\MM;\ \text{$\xi$ is lightlike}\}.
$$

\begin{lemma}
\label{lem_rho_lightlike}
The restriction of $\rho$ on $N^* Z' \setminus \LL$ is an injective immersion
but $d\rho$ fails to be injective on $N^* Z'  \cap \LL$.
\end{lemma}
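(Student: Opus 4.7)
The plan is to parametrize $N^*Z'$ explicitly and then read off both injectivity and the rank of $d\rho$ from the formula for the projection. Using the description in (\ref{Z}), we take the smooth global chart
\[
(v,x,\xi') \in S^2 \times \MM \times (\R^3 \setminus 0)  \longmapsto  (x'-x^0v,\,v,\,\xi',\,x^0\xi'|_{T_vS^2};\, x,\, -\xi'v,\,\xi')\in N^*Z',
\]
where the remaining coordinates $(\xi_0,\eta,y,w)$ are forced by the defining relations. In this chart $\rho$ reads
\[
\rho(v,x,\xi') = (y,v,\eta,w) = (x'-x^0v,\,v,\,\xi',\,x^0\xi'|_{T_vS^2}),
\]
while membership in $\LL$ is, by Lemma \ref{lem_lightlike_collinear}, equivalent to $\xi'|_{T_vS^2}=0$.

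First I will show injectivity on $N^*Z'\setminus\LL$. Given a point $(y,v,\eta,w)$ in the image we immediately recover $v$ and $\xi'=\eta$, and hence $\xi_0=-\xi'v$. Outside $\LL$ we have $\xi'|_{T_vS^2}\ne 0$, so the scalar equation $w=x^0\xi'|_{T_vS^2}$ in $(T_vS^2)^*$ determines $x^0$ uniquely, after which $x'=y+x^0v$ is determined. Thus $\rho$ is injective on $N^*Z'\setminus\LL$.

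Next I will show that $d\rho$ is injective at every point of $N^*Z'\setminus\LL$. A tangent vector at $(v,x,\xi')$ has the form $(V,\dot x^0,\dot x',\dot\xi')$ with $V\in T_vS^2$. The differential of $\rho$ acts by
\[
(V,\dot x^0,\dot x',\dot\xi') \longmapsto \bigl(\dot x'-\dot x^0 v - x^0 V,\ V,\ \dot\xi',\ \dot x^0 \xi'|_{T_vS^2} + x^0 \dot\xi'|_{T_vS^2} + x^0 \xi'|_{DV}\bigr),
\]
where the last term collects the variation of the restriction map under $V$. If this vector is zero then $V=0$ and $\dot\xi'=0$ from the $v$ and $\eta$ components, the $y$ component forces $\dot x'=\dot x^0 v$, and the $w$ component collapses to $\dot x^0\,\xi'|_{T_vS^2}=0$. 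Since $\xi'|_{T_vS^2}\ne 0$ off $\LL$, we obtain $\dot x^0=0$ and hence $\dot x'=0$, so the kernel is trivial.

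Finally, on $N^*Z'\cap\LL$ we have $\xi'|_{T_vS^2}=0$, and the same computation shows that for any $\dot x^0\in\R$ the tangent vector with $V=0$, $\dot\xi'=0$, and $\dot x'=\dot x^0 v$ lies in $\ker d\rho$; this yields a nontrivial (in fact, one-dimensional) kernel. Geometrically this is exactly the phenomenon already recorded in (\ref{lightlike_nonuniqueness_mu}): along a lightlike $(x,\xi)$, translating $x$ along the null geodesic $\mu$ with $\dot\mu\propto(1,-\xi'/\xi_0)$ leaves $\rho$ fixed. The only nontrivial step is the bookkeeping that the variation of $\xi'|_{T_vS^2}$ under $V$ indeed cancels when $V=0$; this is straightforward since that term is linear in $V$.
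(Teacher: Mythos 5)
Your proof is correct and follows essentially the same strategy as the paper: choose a smooth global chart on $N^*Z'$, write $\rho$ explicitly in that chart, and read off both global injectivity and the rank of $d\rho$ from the formula. The paper uses $(y,v,\xi',x^0)$ as coordinates (which makes $d\rho$ block-lower-triangular, so the rank condition is immediate from the diagonal), while you use $(v,x,\xi')$ and do a short elimination; this is a cosmetic difference, not a different argument.
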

\begin{proof}
We may use $(y,v,\xi',x^0) \in \CC \times \R^{3+1}$
as coordinates on $N^* Z$.
In these coordinates
\begin{align*}
\rho = (y, v, \xi', x^0 \xi'|_{T_v S^2}),
\quad
d \rho = \begin{pmatrix}
id &0&0&0\\
0& id &0&0\\
0&0& id &0\\
0& * & * & \xi'|_{T_v S^2}
\end{pmatrix},
\end{align*}
where $*$ denotes an element that does not play a role in the proof.
We see that  $d\rho$ is injective if and only if $\xi'|_{T_v S^2} \ne 0$.
By Lemma \ref{lem_lightlike_collinear} this is equivalent with $\xi$
being lightlike.

Let us now show global injectivity of the restriction. 
Let $(y,v,\xi',x^0)$ and $(\tilde y, \tilde v, \tilde \xi', \tilde x^0)$ be in 
$\CC \times \R^{3+1}$, and suppose that $\xi$ is not lightlike and that $\rho$ maps the corresponding points in $N^* Z$ on the same point.
Then $\tilde y = y$, $\tilde v = v$, $\tilde \xi' = \xi'$, and 
$\tilde x^0 \xi'|_{T_v S^2} = x^0 \xi'|_{T_v S^2}$.
Moreover, $\tilde x^0 = x^0$ since $\xi'|_{T_v S^2} \ne 0$ as $\xi$ is not lightlike. 
\end{proof}

We will show next that $A^* A$ is 
a pseudodifferential operator if we microlocalize away from the set $\LL$.

\begin{lemma}
\label{lem_WF_composition}
Let $(x,\xi) \in T^* \MM \setminus 0$.
Then the composed relation $C = (N^*Z')^{-1} \circ N^*Z'$ 
mapping $T^* \MM$ to itself satisfies
\begin{enumerate}
\item If $\xi$ is timelike, then $C(x,\xi) = \emptyset$.
\item If $\xi$ is spacelike, then $C(x,\xi) = (x,\xi)$.
\item If $\xi$ is lightlike, then 
$C(x,\xi) = 
\{ (\mu(\tau); \xi);\ \tau \in \R \}$,
where $\mu$ is the null geodesic (\ref{lightlike_nonuniqueness_mu}).
\end{enumerate}
\end{lemma}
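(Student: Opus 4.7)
The strategy is to read off $C(x,\xi) = (N^*Z')^{-1}\bigl(N^*Z'(x,\xi)\bigr)$ directly from the explicit description of the forward relation in Lemma \ref{lem_WFA}, using the parametrization (\ref{Z}) to invert. Given any left-hand point $(y,v,\eta,w) \in T^*\CC$ appearing in $N^*Z'$, the parametrization (\ref{Z}) forces $\tilde \xi' = \eta$ and $\tilde \xi_0 = -\eta v$, and then $\tilde x^0$ is recovered from $w = \tilde x^0\, \xi'|_{T_v S^2}$ provided $\xi'|_{T_v S^2} \neq 0$, after which $\tilde x' = y + \tilde x^0 v$ is determined. By Lemma \ref{lem_lightlike_collinear}, the condition $\xi'|_{T_v S^2}\neq 0$ is exactly the condition that $\tilde\xi$ is not lightlike. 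So the only place where true non-injectivity can arise is on lightlike covectors, consistently with Lemma \ref{lem_rho_lightlike}.

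\emph{Timelike case.} Lemma \ref{lem_WFA}(1) gives $N^*Z'(x,\xi) = \emptyset$, so $C(x,\xi) = \emptyset$ trivially.

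\emph{Spacelike case.} For each $v \in S^1_\xi$, Lemma \ref{lem_WFA}(2) produces the image point $(x'-x^0 v,\, v,\, \xi',\, x^0 \xi'|_{T_v S^2})$. Applying the inversion above with this data, $\tilde \xi' = \xi'$ and $\tilde\xi_0 = -\xi' v = \xi_0$ (since $v \in S^1_\xi$), so $\tilde \xi = \xi$; since $\xi$ is not lightlike, $\xi'|_{T_v S^2}\neq 0$, which determines $\tilde x^0 = x^0$, and then $\tilde x' = x'$. Hence $C(x,\xi) = \{(x,\xi)\}$.

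\emph{Lightlike case.} By Lemma \ref{lem_WFA}(3), $N^*Z'(x,\xi)$ is the single point $\bigl(x'+(x^0/\xi_0)\xi',\; -\xi'/\xi_0,\; \xi',\; 0\bigr)$. Any preimage $(\tilde x, \tilde\xi)$ must then satisfy $\tilde \xi' = \xi'$ and, using $|\xi'|^2 = \xi_0^2$, $\tilde \xi_0 = -\xi'\cdot(-\xi'/\xi_0) = |\xi'|^2/\xi_0 = \xi_0$, so $\tilde\xi = \xi$. The matching of the $y$-coordinate gives $\tilde x' - \tilde x^0 v = x' - x^0 v$ with $v = -\xi'/\xi_0$, i.e.
\[
\tilde x' = x' + (\tilde x^0 - x^0)\, v = x' - (\tilde x^0 - x^0)\, \xi_0^{-1}\xi',
\]
while the condition $\tilde x^0\,\xi'|_{T_v S^2} = 0$ is automatic because $\xi'|_{T_v S^2} = 0$ when $\xi$ is lightlike (Lemma \ref{lem_lightlike_collinear}). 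Setting $\tilde x^0 = x^0 + \tau$ shows the preimage is exactly $\{(\mu(\tau),\xi) : \tau \in \R\}$ with $\mu$ as in (\ref{lightlike_nonuniqueness_mu}), giving the claimed formula.

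\textbf{Main obstacle.} There is no serious analytic obstacle: everything is linear algebra once Lemma \ref{lem_WFA} is available. The only delicate feature is the one-parameter family that appears in the lightlike case, which is forced by the failure of $d\rho$ to be injective (Lemma \ref{lem_rho_lightlike}) and is precisely what prevents $A^*A$ from being a pseudodifferential operator on lightlike covectors.
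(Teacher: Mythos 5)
Your proof is correct and follows essentially the same approach as the paper: read off the forward image from Lemma \ref{lem_WFA}, then invert the parametrization (\ref{Z}), with the non-vanishing of $\xi'|_{T_v S^2}$ (via Lemma \ref{lem_lightlike_collinear}) driving the spacelike/lightlike dichotomy. Your lightlike case is slightly more explicit than the paper's one-line remark, but the argument is the same.
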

\begin{proof}
The timelike case follows immediately from Lemma \ref{lem_WFA}.
Let $\xi$ be spacelike, and let $(y,v;\eta,w) \in T^* \CC$ be in relation $N^* Z'$ with $(x,\xi)$
and with a point $(\tilde x, \tilde \xi)$.
We have $\tilde \xi' = \eta = \xi'$. This implies that also $\tilde \xi^0 = -\xi'v = \xi^0$.
As $\xi'|_{T_v S^2} \ne 0$ by Lemma 
\ref{lem_lightlike_collinear}, the equation $\tilde x^0 \xi'|_{T_v S^2} = w = x^0 \xi'|_{T_v S^2}$ implies 
$\tilde x^0 = x^0$. Finally $\tilde x' - x^0 v = y = x' - x^0 v$,
whence $\tilde x' = x'$.

Let $\xi$ be lightlike.
Then Lemma \ref{lem_WFA} says that
$
(x' + \frac {x^0}{\xi_0} \xi', - \frac{\xi'}{\xi_0}, \xi', 0)
$
is the only point in $T^* \CC$ that is in relation $N^* Z'$ with $(x,\xi)$.
This determines $\xi$ uniquely but $x$ only up to a translation along $\mu$.
\end{proof}

\begin{lemma}
\label{lem_clean_intersection}
Let us define $C = N^*Z' \setminus \LL$.
The composition $C^{-1} \circ C$
is clean in the sense of \cite[Th. 21.2.14]{HormanderVol4}. 
The projection $\pi_2$ from the intersection of
$C^{-1} \times C$
with $T^* \MM \times \diag(T^* \CC) \times T^* \MM$ to $(T^*\MM)^2$
is proper, and the fibers $\pi_2^{-1}(p)$, $p \in (T^*\MM)^2$,
are connected.
\end{lemma}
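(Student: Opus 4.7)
My plan is to identify the intersection $I := (C^{-1} \times C) \cap (T^*\MM \times \diag(T^*\CC) \times T^*\MM)$ explicitly using the global and infinitesimal injectivity of $\rho|_C$ from Lemma \ref{lem_rho_lightlike}, and then read off cleanness, properness of $\pi_2$, and connectedness of its fibers from this description combined with the parametrization of $\pi^{-1}(x,\xi)$ by $S^1_\xi$ from Lemma \ref{lem_WFA}.

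First I would identify $I$. A point of $I$ corresponds to a pair $(c_1, c_2) \in C \times C$ with $\rho(c_1) = \rho(c_2)$, and global injectivity of $\rho|_C$ forces $c_1 = c_2$. Hence the map $c \mapsto ((\pi(c), \rho(c)), (\rho(c), \pi(c)))$ is a smooth bijection $C \to I$, so $I$ is a smooth manifold of dimension $\dim C = 9$. For cleanness I would check that a tangent vector to $C^{-1} \times C$ at $(c^{-1}, c)$ has the form $((d\pi(u_1), d\rho(u_1)), (d\rho(u_2), d\pi(u_2)))$ with $u_1, u_2 \in T_c C$, and lies in the tangent space to $T^*\MM \times \diag(T^*\CC) \times T^*\MM$ exactly when $d\rho(u_1) = d\rho(u_2)$; the infinitesimal injectivity of $d\rho|_{T_c C}$ then forces $u_1 = u_2$. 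The intersection of ambient tangent spaces is therefore $9$-dimensional and equals the image of the differential of the bijection above, i.e.\ the tangent space to $I$, verifying the cleanness condition of \cite[Th.~21.2.14]{HormanderVol4}. The excess is $9 - 8 = 1$ (the expected intersection dimension in the $36$-dimensional ambient being $18 + 26 - 36 = 8$), consistent with the claim in Section \ref{sec_results} that $L_{t_1}^*\chi L_{t_1}$ is of order $-1$ rather than the $-3/2$ one would get in the transverse case.

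For $\pi_2$, under the diffeomorphism $C \cong I$ the map becomes $c \mapsto (\pi(c), \pi(c))$. A fiber $\pi_2^{-1}(p_1, p_2)$ is therefore empty unless $p_1 = p_2$, in which case it equals $\pi^{-1}(p_1)$. By Lemma \ref{lem_WFA} this fiber is nonempty iff $\xi$ is spacelike, in which case it is diffeomorphic to the circle $S^1_\xi$ via $v \mapsto (x' - x^0 v, v, \xi', x^0 \xi'|_{T_v S^2})$; all nonempty fibers are thus connected. For properness, given compact $K \subset (T^*\MM)^2$, the set $K_0 := \{p \in T^*\MM : (p, p) \in K\}$ is compact, and $\pi_2^{-1}(K)$ corresponds under the diffeomorphism to $\pi^{-1}(K_0)$; the latter is the image of the compact set $\{(x, \xi, v) \in K_0 \times S^2 : v \in S^1_\xi\}$ under the continuous parametrization just described, hence compact.

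The main technical point is the cleanness check, where one has to keep careful track of the nonzero excess. The exclusion of $\LL$ is essential throughout, since Lemma \ref{lem_rho_lightlike} locates the failure of both global and infinitesimal injectivity of $\rho$ precisely on the lightlike locus; without this exclusion, $I$ would acquire extra components reflecting the artifact along null geodesics isolated in Lemma \ref{lem_WF_composition}(3), and the composition $C^{-1} \circ C$ would no longer be clean.
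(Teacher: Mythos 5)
Your proposal is correct and takes essentially the same approach as the paper: identify $X\cap Y$ with a copy of $C$ (the paper does this via Lemma \ref{lem_WF_composition}, which encodes the global injectivity of $\rho|_C$), verify the cleanness condition by reducing the tangent-space intersection to $d\rho\,\delta p=d\rho\,\delta q$ and invoking infinitesimal injectivity of $d\rho$ on $C$, and identify the nonempty fibers of $\pi_2$ with the circles $S^1_\xi$. You additionally spell out the excess count and a properness argument, which the paper leaves implicit.

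One caveat on the properness step: the compact set $\{(x,\xi,v)\in K_0\times S^2:\ v\in S^1_\xi\}$ maps under your parametrization into $N^*Z'$, not only into $C$; if $K_0$ contains lightlike covectors (which nothing prevents), the corresponding image points lie in $\LL$, so $\pi^{-1}(K_0)\subset C$ is only the \emph{complement of $\LL$} in that compact image and need not be closed or compact. Strict properness of $\pi_2:I\to (T^*\MM)^2$ therefore fails for compacta $K$ whose diagonal slice $K_0$ meets $\LL_\pi$. This is not a defect of your argument relative to the paper, whose proof does not address properness at all, and the issue is harmless in the application because the composition is subsequently cut off by $\chi_0$ with $\supp(\chi_0)\cap\LL_\pi=\emptyset$; but it is worth being explicit that the removal of $\LL$ opens a non-compact escape route for preimages of compacta, so the properness claim should be read microlocally, away from the lightlike cone.
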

\begin{proof}
We write
$$
X = C^{-1} \times C,
\quad 
Y = T^* \MM \times \diag(T^* \CC) \times T^* \MM.
$$
We need to show that $X \cap Y$ is a smooth manifold and that 
\begin{align}
\label{clean_intersection_Tp}
T_p (X \cap Y) = T_p X \cap T_p Y, \quad p \in X \cap Y.
\end{align}
Lemma \ref{lem_WF_composition} implies that $X \cap Y$ coincides with the smooth manifold $\diag(C)$ after a reordering of the factors in the Cartesian product.
We may use $p = (y,v,\xi',x^0) \in \CC \times \R^{3+1}$
as coordinates on $X \cap Y$.
Notice that a tangent vector of $X$, parametrized by 
$(\delta p, \delta q) \in T_{(p,p)} (\CC \times \R^{3+1})^2$,
coincides with a tangent vector of $Y$ if and only if 
$d\rho\, \delta p = d\rho\, \delta q$.
This implies $\delta p = \delta q$ since $d\rho$ is injective on $C$.
Thus (\ref{clean_intersection_Tp}) holds.


A fiber $\pi_2^{-1}(p)$, $p \in (T^*\MM)^2$,
is diffeomorphic with the circle $S_\xi^1$ if $p = (x,\xi,x,\xi)$ and $\xi$ is spacelike,
and it is empty otherwise. 
\end{proof}

Lemma \ref{lem_clean_intersection} says that 
if we cut away the lightlike covectors, then 
we can use the clean intersection calculus \cite[Th. 25.2.3]{HormanderVol4}.
Let $\chi_0 \in S^0(T^* \MM)$ 
satisfy $\supp(\chi_0) \cap \LL_\pi = \emptyset$ where
$$\LL_\pi = \{(x,\xi) \in T^* \MM;\ \text{$\xi$ is lightlike} \}.$$
Then $A^* A \chi_0$ is a Fourier integral operator with 
the canonical relation $C^{-1} \circ C$ where $C = N^*Z' \setminus \LL$,
and Lemma \ref{lem_WF_composition} implies
that $C^{-1} \circ C$ is in $N(\diag(\MM))'$.
Hence $A^* A \chi_0$ is a pseudodifferential operator. 
The calculus implies also that the order of $A^* A \chi_0$
is $-1$ if the order of $A$ is $-3/4$ 
since the fibers $\pi_2^{-1}(p)$, $p \in (T^*\MM)^2$,
are one dimensional (i.e. the excess of the composition is one). 

If we are given the data $Af$ 
we can not compute $A \chi_0 f$,
but we can choose pseudodifferential operators 
$\chi_0$ on $\MM$ and $\chi_1$ on $\CC$
so that $\chi_1 A = \chi_1 A \chi_0$ modulo a smoothing operator. 
Altenatively, we could choose pseudodifferential operators $\chi_1$
and $\chi_0$, both
on $\MM$, so that $\chi_1 A^* A \chi_0 = \chi_1 A^* A$
modulo a smoothing operator. 
By Lemma \ref{lem_WF_composition} 
this follows if 
 $\chi_0 = 1$ in $\supp(\chi_1)$
and $\supp(\chi_0) \cap \LL_\pi = \emptyset$.
However, we prefer using $\chi_1$ on $\CC$
since we need also a cut off $\chi_2 \in C_0^\infty(\UU_1)$
due to the fact that we do not have data on $\R^3 \setminus \UU_1$.

\begin{proposition}
\label{prop_principal_symbol}
Let $\tilde  \chi_1 \in C_0^\infty(\R)$
satisfy $\supp(\tilde \chi_1) \subset (-1,1)$, and define
$$
\chi_1(v,\eta) = \tilde \chi_1(|\eta v|/|\eta|),
\quad v \in S^2,\ \eta \in \R^3.
$$
Let $\UU_1 \subset \R^3$ be open and let $\chi_2 \in C_0^\infty(\UU_1)$.
Then the principal symbol of the pseudodifferential operator 
$L^* \chi_1 \chi_2 L$ is
$$
\sigma_0(x,\xi) = \frac{2\pi \tilde \chi_1(|\xi_0|/|\xi'|)}{(|\xi'|^2 - |\xi_0|^2)^{1/2}} 
\int_{S_\xi^1} \chi_2(x'-x^0 v) \theta^j \theta^k \theta^l \theta^m dv,
$$
where $\theta = (1,v)$.
\end{proposition}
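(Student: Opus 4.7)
The plan is to compute the Schwartz kernel of $L^* \chi_1 \chi_2 L$ explicitly as an oscillatory integral and read off the principal symbol. Starting from the FIO representation derived in Section \ref{sec_muloc_L}, namely that $L$ has kernel
$$
K_L((y,v); x) = (2\pi)^{-3} \int_{\R^3} e^{i\eta \cdot (y - x' + x^0 v)} \theta^l \theta^m \, d\eta, \qquad \theta=(1,v),
$$
I would apply $\chi_2$ as multiplication in $y$, $\chi_1$ as a Fourier multiplier in the same variable (with $v$ as parameter), and then compose with $L^*$, whose kernel is $\delta(y - \tilde x' + \tilde x^0 v) \theta^j \theta^k$. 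A direct substitution and change of variable $x = (s, y' + sv)$ inside the $L$-integration yields
$$
K^{jklm}(\tilde x, x) = (2\pi)^{-3} \!\! \int_{S^2}\!\! \int_{\R^3}\!\! e^{i\eta \cdot (\tilde x' - x' + (x^0 - \tilde x^0) v)} \chi_1(v, \eta) \chi_2(x' - x^0 v) \theta^j \theta^k \theta^l \theta^m \, d\eta \, dv.
$$

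Next I would bring this into standard pseudodifferential form. The phase equals $\xi \cdot (\tilde x - x)$ in the Euclidean pairing once we set $\xi' = \eta$ and $\xi_0 = -\eta v$. Since $(v,\eta)$ is five-dimensional while $\xi$ ranges over $\R^4$, I would insert $1 = \int_\R \delta(\xi_0 + \eta v) \, d\xi_0$, interchange the order of integration, and obtain
$$
K^{jklm}(\tilde x, x) = (2\pi)^{-3} \int_{\R^4} e^{i\xi \cdot (\tilde x - x)} \left( \int_{S^2} \delta(\xi_0 + \xi' v) \chi_1(v, \xi') \chi_2(x' - x^0 v) \theta^j \theta^k \theta^l \theta^m \, dv \right) d\xi.
$$
The inner integral is precisely the pullback $\kappa_\xi^* \delta$ with $\kappa_\xi(v) = \xi_0 + \xi' v$ analyzed in the proof of Lemma \ref{lem_Fourier}: for spacelike $\xi$ the gradient of $\kappa_\xi$ on $S^2$ has norm $(|\xi'|^2 - |\xi_0|^2)^{1/2}$ on its zero set $S^1_\xi$, while for non-spacelike $\xi$ the zero set is empty. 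Hence the inner integral equals
$$
\frac{1_C(\xi)}{(|\xi'|^2 - |\xi_0|^2)^{1/2}} \int_{S^1_\xi} \chi_1(v,\xi') \chi_2(x' - x^0 v) \theta^j \theta^k \theta^l \theta^m \, dv,
$$
where $C$ is the open spacelike cone.

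Finally, the factor $(2\pi)^{-3}$ in the kernel, compared with the standard $(2\pi)^{-4}$ normalization, contributes the overall factor $2\pi$ to the total symbol. On $S^1_\xi$ we have $\xi' v = -\xi_0$, so $|\eta v|/|\eta| = |\xi_0|/|\xi'|$ and therefore $\chi_1(v, \xi') = \tilde \chi_1(|\xi_0|/|\xi'|)$ is independent of $v$ on the integration circle and pulls out; moreover $\supp(\tilde\chi_1) \subset (-1,1)$ makes the indicator $1_C(\xi)$ automatic. This yields the asserted formula for $\sigma_0$. Since $\tilde\chi_1(|\xi_0|/|\xi'|)$ and the $S^1_\xi$-integral are homogeneous of degree $0$ and $(|\xi'|^2 - |\xi_0|^2)^{-1/2}$ is homogeneous of degree $-1$ in $\xi$, the total symbol is already homogeneous of degree $-1$ and equals the principal symbol. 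That the operator is genuinely a pseudodifferential operator follows either from the explicit computation above or from Lemma \ref{lem_clean_intersection}: the support condition $\supp(\tilde\chi_1) \subset (-1,1)$, combined with Lemma \ref{lem_lightlike_collinear}, keeps us strictly inside the spacelike cone and hence away from the bad set $\LL$ where the Bolker condition fails. The main technical point is the five-to-four dimensional reduction of the $(v,\eta)$ integration, which is handled cleanly by the delta-function device together with the Jacobian computation already performed in Lemma \ref{lem_Fourier}.
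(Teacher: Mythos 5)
Your proof is correct and follows essentially the same route as the paper: write the Schwartz kernel of $L^*\chi_1\chi_2 L$ as a $(v,\eta)$-oscillatory integral, insert $\delta(\xi_0+\eta v)\,d\xi_0$ to pass to a four-dimensional frequency variable $\xi$, and evaluate the resulting pullback $\kappa_\xi^*\delta$ on $S^2$ using the Jacobian $(|\xi'|^2-|\xi_0|^2)^{1/2}$ already computed in Lemma \ref{lem_Fourier}, after which the $\chi_1$ factor freezes to $\tilde\chi_1(|\xi_0|/|\xi'|)$ on $S_\xi^1$ and the indicator of the spacelike cone is absorbed by $\supp\tilde\chi_1\subset(-1,1)$. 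One small inaccuracy: for lightlike $\xi\ne0$ the zero set $\kappa_\xi^{-1}(0)\cap S^2$ is a single point rather than empty, though this is harmless here since the $\tilde\chi_1$ cutoff keeps the computation strictly inside the spacelike cone (as you note at the end).
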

\begin{proof}
We choose $\tilde \chi_0 \in C_0^\infty(-1,1)$
such that 
$\tilde \chi_0 = 1$ in $\supp(\tilde \chi_1)$, and define
$\chi_0(\xi) = \tilde \chi_0(|\xi_0|/|\xi'|)$.
If $(y, v, \eta, w;\, x, \xi_0, \xi') \in N^* Z'$
then $\eta = \xi'$ and $\xi_0 = - \xi'v = -\eta v$.
Thus $|\eta v|/|\eta| = |\xi_0|/|\xi'|$
and $\chi_0(\xi) = 1$ if $\chi_1(v,\eta) \ne 0$.
Now $\chi_1 L = \chi_1 L \chi_0$ modulo a smoothing operator
and $\supp(\chi_0) \cap \LL_\pi = \emptyset$.
Hence $L^* \chi_1 \chi_2 L$ is a pseudodifferential operator.

The kernel of $L^* \chi_1 \chi_2 L$ is of the form
\begin{align*}
\int_{\CC \times \R^3 \times \R^3} e^{-i \phi(x,y,v,\xi')} e^{i \phi(z,y,v,\eta)} 
\aa(z,v,\eta) d\eta d\xi' dy dv,
\quad x,z \in \R^4,
\end{align*}
where $\aa(z,v,\eta) = (2\pi)^{-3} \chi_1(v, \eta) \chi_2(z'-z^0 v) \theta^j \theta^k \theta^l \theta^m$.
We have 
$$
\phi(z,y,v,\eta)
- \phi(x,y,v,\xi')
= (\eta - \xi')y - \tilde \phi(z,v,\eta) + \tilde \phi(x,v,\xi'),
$$
where $\tilde \phi(x,v,\xi') = \xi'x' - x^0 \xi'v$.
Hence the kernel simplifies to
\begin{align*}
&\int_{S^2 \times \R^3 \times \R^3} \delta(\eta - \xi') 
e^{i \tilde \phi(x,v,\xi') - i \phi(z,v,\eta)} 
\aa(z,v,\eta) d\eta d\xi' dv
\\&\quad=
\int_{S^2 \times \R^3} 
e^{i \xi'(x'-z') - i(x^0 - z^0) \xi'v} 
\aa(z,v,\xi') d\xi' dv
\\&\quad= \int_{\R^4} \int_{S^2}
e^{i \xi (x-z)} \kappa_\xi^* \delta(v) \aa(z,v,\xi') dv d\xi,
\end{align*}
where $\kappa_\xi(v) = \xi_0 + \xi'v$.
By (\ref{kappaxi_pullback}) the symbol $\sigma(z,\xi)$ of 
$L^* \chi_1 \chi_2 L$ is 
\begin{align*}
\sigma(z,\xi) 
&= 
(2\pi)^4 (|\xi'|^2 - |\xi_0|^2)^{-1/2} \int_{S_\xi^1} \aa(z,v,\xi') dv
\\\nonumber&=
\frac{2\pi \tilde \chi_1(|\xi_0|/|\xi'|)}{(|\xi'|^2 - |\xi_0|^2)^{1/2}} 
\int_{S_\xi^1} \chi_2(z'-z^0 v) \theta^j \theta^k \theta^l \theta^m dv.
\end{align*}
Thus $\sigma_0(x,\xi) = \sigma(x,\xi)$.
\end{proof}

\begin{proof}[Proof of Theorem \ref{th_main}]
By the conformal invariance it is enough consider the Minkowski case. 
Let $(x,\xi)$ be spacelike and choose the cutoff $\tilde \chi_1$ in 
Proposition \ref{prop_principal_symbol}
so that $\tilde \chi_1 = 1$ near $|\xi_0|/|\xi'|$.
Then $\sigma_0(x,\xi) = c(\xi) N^{jklm}$,
where $c(\xi) \ne 0$ and $N^{jklm}$ is the tensor (\ref{def_N})
with $\chi(v) = \chi_2(x'-x^0 v)$.
We recall that the 
visibility condition (\ref{visibility_cond})
is equivalent to $S_{x,\xi}^1 \cap \UU_1 \ne \emptyset$,
where the circle $S_{x,\xi}^1$ is defined by (\ref{def_S1xxi}).
Thus we can choose non-negative $\chi_2 \in C_0^\infty(\UU_1)$
so that $\chi$ does not vanish identically on $S_\xi^1$. 

Let $P$ be the projection (\ref{the_projection}).
Note that both $N$ and $P$ are homogeneous of degree zero, 
and 
$W_0 = N + (1-P)$
is elliptic near $(x,\xi)$. Indeed, if $W_0 f  = 0$ then 
$$
0 = PW_0 f = NPf, \quad 0 = (1-P)W_0 f = (1-P)f.
$$
Thus $P f \in \ker(N) = \ker(P)$ and $f = Pf = P^2 f = 0$.

As $W_0$ is the principal symbol of 
the zeroth order pseudodifferential operator 
$$
W = c^{-1} L^* \chi_1 \chi_2 L + (1-P),
$$
there is a parametrix $Q$ of $W$ such that $W = 1$ 
modulo a smoothing operator 
near $(x,\xi)$.
Thus $P =  Q W P = Q c^{-1} L^* \chi_1 \chi_2 L P$.
Hence
$$
(x,\xi) \in \WF(Pf) \quad \text{if and only if} 
\quad (x,\xi) \in \WF( L^* \chi_1 \chi_2 L Pf),
$$
and we apply Lemma \ref{lem_microlocal_nullsp} to conclude.
\end{proof}

\medskip

\noindent{\bf Acknowledgements.} 
The authors express their gratitude to
the  Institute Henri Poincar\'e where a part of this work has been done.
ML was partly supported by the Academy of Finland project 272312 and the Finnish Centre of Excellence in Inverse Problems Research 2012-2017.
LO was partly supported by the Engineering and Physical Sciences Research Council, UK, grant EP/L026473/1.
PS was partly supported by an NSF FRG Grant DMS-1301646.
GU was partly supported by NSF, Simons Fellowship, and a FiDi Professorship at the University of Helsinki.

\bibliographystyle{abbrv}
\bibliography{refs}

\end{document}